\newif\ifabstract 
\newif\ifdraft
\pgfplotsset{compat = 1.17}
\crefname{conjecture}{conjecture}{conjectures}
\Crefname{Conjecture}{Conjecture}{Conjectures}
\newtheoremstyle{mystyle}
  {15pt}
  {15pt}
  {\it}
  {0.cm}
  {\bf}
  {.}
  {0.3cm}
  {}
\newtheoremstyle{mystyle-not-italic}
  {15pt}
  {15pt}
  {}
  {0.cm}
  {\bf}
  {.}
  {0.3cm}
  {}
\theoremstyle{mystyle}
\newtheorem{theorem}{Theorem}
\newtheorem*{theorem*}{Theorem}
\newtheorem{proposition}[theorem]{Proposition}
\newtheorem*{proposition*}{Proposition}
\newtheorem*{corollary*}{Corollary}
\newtheorem{lemma}[theorem]{Lemma}
\newtheorem*{lemma*}{Lemma}
\newtheorem{conjecture}{Conjecture}
\newtheorem*{conjecture*}{Conjecture}
\newtheorem{result}{Result}
\newtheorem*{result*}{Result}
\theoremstyle{mystyle-not-italic}
\newtheorem{definition}{Definition}
\newtheorem*{definition*}{Definition}
\newtheorem*{example*}{Example}
\theoremstyle{mystyle-not-italic}
\newtheorem{remark}{Remark}
\newtheorem*{remark*}{Remark}
\newcommand{\myorcid}[1]{\href{https://orcid.org/#1}{\includegraphics[height=0.25cm]{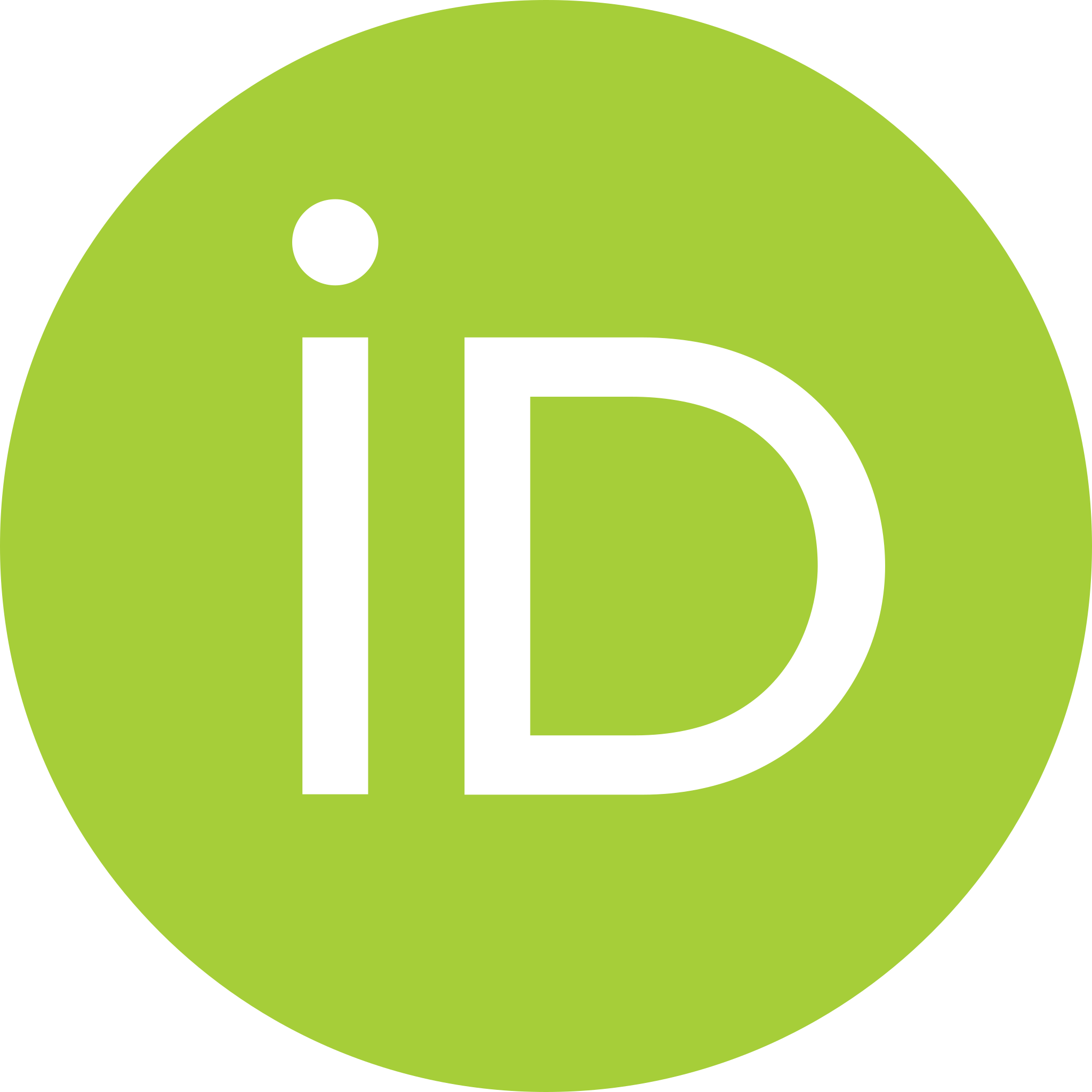}}}
\title{Towards Unconditional Uncloneable Encryption}
\author[1]{Pierre Botteron~\myorcid{0000-0002-3861-0934}}
\author[2]{Anne Broadbent~\myorcid{0000-0003-1911-0093}}
\author[3]{Eric Culf}
\author[4]{Ion Nechita~\myorcid{0000-0003-3016-7795}}
\author[1]{Clément Pellegrini}
\author[2]{Denis Rochette~\myorcid{0000-0001-5371-0426}}
\affil[1]{Institut de Mathématiques de Toulouse, Université de Toulouse, UPS, France.}
\affil[2]{Department of Mathematics and Statistics, University of Ottawa, Ottawa, Canada.}
\affil[3]{Institute for Quantum Computing, University of Waterloo, Waterloo, Canada.}
\affil[4]{Laboratoire de Physique Théorique, Université de Toulouse, UPS, France.}
\begin{document}

\fontfamily{bch}\selectfont 

\maketitle

\begin{abstract}
    Uncloneable encryption is a cryptographic primitive which encrypts a classical message into a quantum ciphertext, 
    such that two  quantum adversaries are limited in their capacity of being able to simultaneously decrypt,  
    given the key and quantum side-information produced from the ciphertext.  
    Since its initial proposal and scheme in the random oracle model by Broadbent and Lord [TQC 2020], 
    uncloneable encryption has developed into an important primitive at the foundation of quantum uncloneability for cryptographic primitives. 
    Despite sustained efforts, however, 
    the question of unconditional uncloneable encryption (and in particular of the simplest case, called an ``uncloneable bit'') has remained elusive. 
    Here, we propose a candidate for the unconditional uncloneable bit problem, and provide strong evidence that the adversary's success probability in the related security game converges quadratically as $\frac{1}{2}+\frac{1}{2\sqrt{K}}$, 
    where $K$ is polynomial in the size of the encoding, representing the number of keys and $\frac{1}{2}$ is trivially achievable. 
    We prove this bound's validity for $K$ ranging from $2$ to $7$ and  demonstrate the validity up to $K = 17$ using computations based on the NPA hierarchy. 
    We furthermore provide compelling heuristic evidence towards the general case. 
    In addition, we prove an asymptotic upper-bound of $\frac{5}{8}$ and give a numerical upper-bound of $\sim\!0.5980$, which to our knowledge is the best-known value in the unconditional~model.\looseness=-1
\end{abstract}

\vspace{-1cm}
\setcounter{tocdepth}{2}
\tableofcontents

\ifabstract
\section*{Introduction}
\else
\section{Introduction}
\fi
\label{sec:introduction}

Uncloneable encryption is an encryption scheme for classical messages into quantum ciphertexts that inherently protects the underlying plaintext from cloning, as defined by a security game that pits a triple of adversaries ($\Pirate$, $\Bob$, $\Charlie$) against an honest sender $\Alice$ (see \Cref{fig:uncloneable_encryption}).  Clearly, such unclonability is unattainable with classical ciphertexts since $\Pirate$ can send two copies of the  classical ciphertext and, given the decryption key, $\Bob$ and $\Charlie$ can  decrypt in a straightforward way. 

\begin{figure}
    \centering
    \ifabstract
        \resizebox{!}{6.5cm}{
    \else
    \fi
    \begin{tikzpicture}
        \newcommand{\fillproportion}{20}
        \tikzset{
            block/.style={draw, minimum width=3em, minimum height=3em, font=\sffamily\Large, circle, line width=1mm},
            arrow/.style={-stealth, ultra thick},
            double_arrow/.style={-{Implies[]}, very thick, double},
        }
    
        \node[block, draw=DarkRed, fill=DarkRed!\fillproportion] (A) {A};
        \node[block, draw=Black, fill=Black!\fillproportion, right=6em of A] (P) {P};
        \node[block, draw=DarkGreen, fill=DarkGreen!\fillproportion, above right=1.5em and 5em of P] (B) {B};
        \node[block, draw=DarkBlue, fill=DarkBlue!\fillproportion, below right=1.5em and 5em of P] (C) {C};

        \draw[dashed, very thick] ([xshift=3.5em] P.center) -- ([xshift=13em] P.center);

        \draw[arrow] (A.east) -- node[below, font=\large] {$\rho_{m,k}$} ([xshift=-2pt] P.west);
        
        \draw[arrow] (P.30) -- ([xshift=-2pt] B.200);
        \draw[arrow] (P.330) -- ([xshift=-2pt] C.160);
    
        \draw[double_arrow] ([xshift=-3em,yshift=2em] A.north) node [label = left:{$m$}] {} -- ([yshift=2em] A.north) -- ([yshift=2pt] A.north);
        \draw[double_arrow] ([xshift=-3em,yshift=-2em] A.south) node [label = left:{$k$}] {} -- ([yshift=-2em] A.south) -- ([yshift=-2pt] A.south);
        
        \draw[double_arrow] ([xshift=-3em,yshift=2em] B.north) node [label = left:{$k$}] {} -- ([yshift=2em] B.north) -- ([yshift=2pt] B.north);
        \draw[double_arrow] (B.east) -- ([xshift=2em] B.east) node [label = right:{$m_{\Bob}$}] {};
        
        \draw[double_arrow] ([xshift=-3em,yshift=-2em] C.south) node [label = left:{$k$}] {} -- ([yshift=-2em] C.south) -- ([yshift=-2pt] C.south);
        \draw[double_arrow] (C.east) -- ([xshift=2em] C.east) node [label = right:{$m_{\Charlie}$}] {};
    \end{tikzpicture}
    \ifabstract
        }
    \else
    \fi
    \caption{
        \emph{No-Cloning Game for a 1-Bit Message.} 
        Alice~($\Alice$)  encrypts a uniformly random message $m \in \{0,1\}$ using a classical key $k\in\set{1,..,K}$ into a quantum state $\rho_{m,k}\in\B(\Hilbert_\Alice)$. 
        She transmits it to a pirate~($\Pirate$) modeled by a \CPTP~map $\Phi: \B(\Hilbert_\Alice) \to \B(\Hilbert_{\Bob} \otimes \Hilbert_{\Charlie})$. 
        Bob~($\Bob$) and Charlie~($\Charlie$) are then given their respective registers $\Hilbert_{\Bob}$ and $\Hilbert_{\Charlie}$, as well as a copy of the key~$k$. 
        They  output $m_{\Bob}$, $m_{\Charlie} \in \set{0, 1}$, respectively, and \emph{collaboratively win} if and only if $m_\Bob\!=\!m_\Charlie\!=\!m$. 
        \emph{Uncloneable-indistinguishable security} holds if the winning probability is upper-bounded by $\sfrac{1}{2} + \negl(\lambda)$ for some security parameter~$\lambda$.
    }
    \label{fig:uncloneable_encryption}
\end{figure}

Originally proposed by Broadbent and Lord~\cite{BL20} who showed  its achievability in the quantum random oracle model, (and under a less stringent definition called \emph{uncloneable security}), uncloneable encryption has become an important building block for quantum cryptography, including for private-key quantum money~\cite{BL20}, preventing storage attacks~\cite{BL20}, quantum functional encryption~\cite{MM24arxiv}, quantum copy-protection~\cite{AK21}, quantum position verification~\cite{GALC25}, and uncloneable decryption~\cite{GZ20eprint, SW22arxiv, KT25}.   

Given the importance of uncloneable encryption, efforts have focused on its achievability under various models and definitions, including achievability 
 in the quantum random oracle model (QROM)~\cite{BL20, AKL+22, AKL23}, 
in an interactive version of the scenario~\cite{BC23arxiv}, 
in a device-independent variant with variable keys~\cite{KT25}, 
assuming the existence of specific types of obfuscation~\cite{AB24,CHV24arxiv}, and in a variant with quantum keys~\cite{AKY25}.

Many open questions remain in the study of uncloneable cryptography, notably the achievability of a scheme that provides \emph{uncloneable-indistinguishability} security in the sense originally defined in~\cite{BL20}: the security definition considers a game of the form of \Cref{fig:uncloneable_encryption}, but where a message 
$m\in \{0,1\}^n$ is selected by the adversary, and  the challenge that the adversaries $\Bob$ and $\Charlie$ face is to identify if the original message $m$, or a fixed message $0^n$, was encrypted, where the two cases happen with equal probability. In this scenario,  limitations on possible schemes have been identified ~\cite{MST21arxiv,AKL+22}. Notably, however, achievability in the standard model, even with computational assumptions, is wide open; for further discussion and a candidate scheme, see~\cite{CHV24arxiv}. 

At the heart of this intriguing open question is the simplest case, called the \emph{uncloneable bit}, where $m \in \{0,1\}$ (see \Cref{fig:uncloneable_encryption}), which, despite its simplicity, has remained unsolved in the plain model. Its importance is highlighted in~\cite{HKNY24}, where it is shown that a scheme for an uncloneable bit can be transformed into a scheme that encrypts general messages and that satisfies uncloneable-indistinguishability\footnote{For conventional encryption, encrypting a message bitwise with a single-bit encryption scheme typically yields a secure encryption; however, such a construction is not secure in the context of uncloneable encryption.}. 

\ifabstract
\subsection*{Results}
\else
\subsection{Results}
\fi

Our work focuses on the achievability of an encryption scheme that realizes an \emph{uncloneable bit} in the statistical model, thus without any computational or setup assumptions. However, given the apparent difficulty of achieving this task, we  relax the security requirement, and we ask that the success probability of the adversaries be no more than $\sfrac{1}{2} + f(\lambda)$, for some $f: \RR \to \RR$ such that $\lim_\lambda f(\lambda) = 0$ (weak security). 
This is a relaxation of the usual requirement that $f$ be a negligible function (strong security). 

Our contribution is a new candidate scheme for this relaxation of the uncloneable bit question. We prove weak security for some small security parameters and provide strong numerical evidence that it also holds asymptotically. 

\paragraph*{Candidate Scheme for an Uncloneable Bit.} Our candidate scheme is based on a family of pairwise anti-commuting $n$-qubit Pauli strings, of the following form for $n$ even:
\begin{equation}  \label{eq:Pauli_strings}
       X^{\otimes (i-1)} \otimes Y \otimes I^{\otimes (n-i)} \quad \text{ and } \quad X^{\otimes (i-1)} \otimes Z \otimes I^{\otimes (n-i)}, \qquad i \in \set{1, \ldots, n}\,.
\end{equation}
Note that there are $2n$ such strings. When $n$ is odd, we add to the above set  $X^{\otimes n}$ and obtain $2n+1$ strings.  
We index these strings as $\Gamma_k$, and based on this, define the following candidate (see details in\ifabstract{ the main manuscript}\else{~\Cref{sec:scheme}}\fi).

\begin{definition*}[Candidate Scheme for an Uncloneable Bit]
    Consider $\Gamma_1, ..., \Gamma_K$ Hermitian unitaries that anti-commute (\eg the pairwise anti-commuting Pauli strings in \cref{eq:Pauli_strings}), of dimension $d=2^\lambda$, with $K = 2\lambda$ for even\footnote{We only consider the case $\lambda$ even here and let the reader see the complete definition of the scheme in\ifabstract{ the main manuscript}\else{~\Cref{sec:Clifford_Algebra}}\fi.} $\lambda$.
    Sample uniformly at random a key $k\in\set{1,\dots,K}$.
    From this key~$k$, encrypt a classical message $m\in\set{0,1}$ into the following quantum state:
    $$
        \rho_{m,k} = \frac{2}{d}\, \frac{I_d + (\shortminus 1)^m \Gamma_k}{2}\,,
    $$ 
    which is the normalized projector onto the eigenvalue $(\shortminus 1)^m$ of $\Gamma_k$.
    For the decryption scheme of a quantum state $\rho_{m,k}$, measure it in the eigenbasis of $\Gamma_k$. The outcome is the decrypted message.
\end{definition*}

One can easily check the correctness of this protocol, \ie decrypting an encrypted message recovers the initial message with full probability. 

Our scheme can be seen as a generalization of the basic uncloneable encryption scheme \cite{BL20} that uses as key $\theta \in \{0,1\}$ and encodes a bit $m$ as $H^\theta\ket{m}$. This encoding corresponds to the $K=2$ case of our candidate. Already in the case $K=2$, using the proof techniques from monogamy-of-entanglement games~\cite{TFKW13}, the success probability of the adversary at the no-cloning game is $\frac{1}{2} + \frac{1}{2\sqrt{2}}$, which is consistent with the following conjecture (see\ifabstract{ the main manuscript}\else{~\Cref{conj:main_conjecture}}\fi). 

\begin{conjecture*}
    Our candidate encryption scheme is weakly uncloneable-indistinguishable secure with the following upper-bound:
    \[
        \Pr\of[\Big]{\text{$(\Pirate, \Bob, \Charlie)$ win the no-cloning game}}
        \,\leq\,
        \frac{1}{2}+\frac{1}{2\sqrt{K}}\,,
    \]
    where $K$ is the number of keys.
\end{conjecture*}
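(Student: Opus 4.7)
My plan is to reduce the winning probability to a trace expression tied to the $\Gamma_k$'s, and then control it using the Clifford algebra structure they generate.

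\textbf{Step 1 (Reformulation on Bob--Charlie's side).} Expanding $B_k^m\otimes C_k^m = \tfrac{1}{4}\bigl(I+\tilde B_k\otimes\tilde C_k+(-1)^m(\tilde B_k\otimes I+I\otimes\tilde C_k)\bigr)$ with $\tilde B_k:=B_k^0-B_k^1$, $\tilde C_k:=C_k^0-C_k^1$ (Hermitian contractions), together with $\Phi(\rho_{m,k})=\sigma+\tfrac{(-1)^m}{d}\tau_k$ where $\sigma:=\Phi(I/d)$ and $\tau_k:=\Phi(\Gamma_k)$, a direct trace calculation yields
\begin{equation*}
    p_{\mathrm{win}} \;=\; \frac{1+\alpha}{4} + \frac{\beta}{4},
    \qquad
    \alpha := \frac{1}{K}\sum_k\operatorname{Tr}\!\bigl[(\tilde B_k\otimes\tilde C_k)\sigma\bigr],
    \quad
    \beta := \frac{1}{dK}\sum_k\bigl(\operatorname{Tr}[\tilde B_k\tau_k^B]+\operatorname{Tr}[\tilde C_k\tau_k^C]\bigr),
\end{equation*}
with $\tau_k^B:=\operatorname{Tr}_C\tau_k$ and $\tau_k^C:=\operatorname{Tr}_B\tau_k$. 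The conjectured bound is then equivalent to $\alpha+\beta\leq 1+\tfrac{2}{\sqrt{K}}$.

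\textbf{Step 2 (Single-operator Clifford inequality).} The anti-commutation of the $\Gamma_k$'s yields the following Parseval-type estimate for any Hermitian contraction $M$ on $\mathcal{H}_A$:
\begin{equation*}
    \sum_k \operatorname{Tr}[M\,\Gamma_k]^2 \;\leq\; d^2,
\end{equation*}
since the $\Gamma_k$'s are pairwise Hilbert--Schmidt orthogonal and extend to an orthonormal basis of $B(\mathcal{H}_A)$ (of size $d^2=2^{2\lambda}$ when $K=2\lambda$); alternatively, setting $c_k:=\operatorname{Tr}[M\Gamma_k]/d$, the operator $O:=\sum_k c_k\Gamma_k$ squares to $(\sum_k c_k^2)I$ by pairwise anti-commutation, so $\|O\|_\infty=\sqrt{\sum_k c_k^2}$, and combining $\operatorname{Tr}[MO]=d\sum_k c_k^2$ with H\"older's inequality forces $\sum_k c_k^2\leq 1$. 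Applying Cauchy--Schwarz over $k$ to $\beta$, combined with $(a+b)^2\leq 2(a^2+b^2)$, then aims to reduce the task to extending this estimate to $k$-dependent test operators.

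\textbf{Main obstacle.} The difficulty is that the Parseval-type bound above applies only to a \emph{single} fixed $M$: for unconstrained families $M_k$ one merely gets $\sum_k\operatorname{Tr}[M_k\Gamma_k]^2\leq Kd^2$, which is too weak by a factor of $K$ (take $M_k=\Gamma_k$). Consequently neither $\alpha$ nor $\beta$ can be bounded in isolation: Bob alone (with $\tilde C_k\equiv0$) already saturates $\beta\approx 1$, and it is only by requiring Charlie's decryption to succeed \emph{simultaneously} that the joint probability is constrained. Thus $\alpha$ and $\beta$ must be bounded jointly, reflecting a monogamy-of-entanglement trade-off between the correlation $\alpha$ induced by the average output $\sigma$ and the partial-trace reconstructions $\tau_k^B,\tau_k^C$. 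The plan to close the gap is therefore to seek a joint operator inequality $\alpha+\beta\leq 1+2/\sqrt{K}$ --- ideally a sum-of-squares (or Naimark-style) certificate that couples the Bob--Charlie correlation term with the Clifford-coefficient terms through the tensor-product structure of $\Phi$. Producing such a certificate uniformly in $K$ is, I expect, the genuine obstacle, consistent with the paper establishing the bound analytically only for $K\leq 7$ and verifying it numerically up to $K=17$ via the NPA hierarchy.
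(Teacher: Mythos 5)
The statement you are asked to prove is explicitly labeled a \emph{conjecture} in the paper, and the paper itself does not prove it: the authors establish it analytically only for $K\leq 7$ (via an explicit sum-of-squares certificate and an NPA level-$1$ computation), verify it numerically up to $K\leq 17$ (NPA level $2$), and prove the weaker asymptotic upper bound of $5/8$. Your proposal is therefore correct to stop short of a full proof; what matters is whether the partial analysis you give is sound and aligned with the paper's approach, and it is.

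Your Step 1 reformulation is correct: writing $\rho_{m,k}=\tfrac{I_d}{d}+\tfrac{(-1)^m}{d}\Gamma_k$, expanding the POVM tensor product, and averaging over $m$ and $k$ indeed gives $p_{\mathrm{win}}=\tfrac{1}{4}+\tfrac{\alpha+\beta}{4}$ with the $\alpha,\beta$ you define. Your target $\alpha+\beta\leq 1+\tfrac{2}{\sqrt K}$ is equivalent, after the paper's Choi/purification relaxation, to the bound $\operatornorm{W_K}\leq K+2\sqrt K$ that the paper isolates as Conjecture~1 (their~\cref{eq:conjecture,eq:clifford_winning_probability_upperbound}). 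Your Step~2 Parseval-type estimate, via $O=\sum_k c_k\Gamma_k$ and $O^2=(\sum_k c_k^2)I$, is a valid consequence of pairwise anti-commutation and is precisely the algebraic fact the paper records as~\cref{eq:formula-with-norm-2}. You are also right that this inequality bounds correlations with a \emph{single} Hermitian contraction and cannot be applied coordinatewise to the $k$-dependent operators $\tilde B_k,\tilde C_k$ without losing a factor of $K$; the paper makes the same point when it shows the naive triangular inequality only yields the trivial bound $1$.

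The obstacle you name is genuine and is exactly where the paper's work goes. Bounding $\alpha$ and $\beta$ separately fails (Bob alone gives $\beta\approx 1$), and what is needed is a joint operator inequality that encodes the monogamy trade-off between Bob's and Charlie's decryption. The paper supplies such a certificate only for small $K$: the SoS decomposition~\cref{eq:first_levels_sos} exhibits $P_K=(K+2\sqrt K)I-W_K$ as a sum of squares with a nonnegative coefficient $\alpha_K$ only for $K\leq 7$, and their analytic solution of the NPA level-$1$ SDP (\Cref{thm:5/8-upper-bound}) shows that the first level of the hierarchy ceases to match $\tfrac12+\tfrac{1}{2\sqrt K}$ exactly for $K\geq 8$, degrading to $\tfrac58+\tfrac{1}{2(K-2)}-\tfrac{1}{4K}$. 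So the ``joint SoS/Naimark-style certificate uniform in $K$'' you call for in your final paragraph is indeed the open problem; your proposal reaches the same frontier as the paper, and your diagnosis of why the single-operator Clifford bound is insufficient is accurate.
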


The contributions of this paper focus on this conjecture and are divided into several analytical and numerical results, as summarized in~\Cref{fig:results}. The source code for our numerical methods is available on GitHub\footnote{\url{https://github.com/denis-rochette/Towards-Unconditional-Uncloneable-Encryption}}.
Our methods are mainly based on the Navascués-Pironio-Acín (NPA) hierarchy~\cite{NPA08} and Sum-of-Squares (SoS) decompositions~\cite{McC01,Hel02}, which are standard primal/dual SDP algorithms to approximate the set of quantum (commuting) correlations.
Those methods provide upper-bounds on the winning probability:

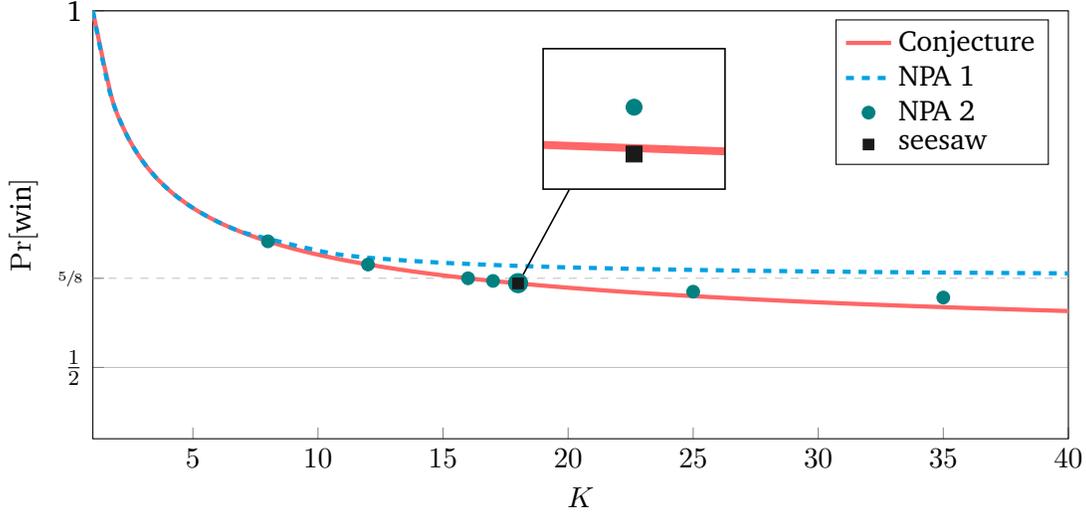
\begin{figure}[t]
    \centering
    \ifabstract
        \resizebox{!}{6.5cm}{
    \else
    \fi
    \begin{tikzpicture}
        \begin{axis}[
            xlabel = {$K$},
            ylabel = {$\Pr[\text{win}]$},
            ytick = {0, 1/2, 1},
            extra y ticks = {5/8},
            yticklabels = {0, $\frac{1}{2}$, 1},
            extra y tick labels = {$\sfrac{5}{8}$},
            extra y tick style = {major grid style=dashed, font=\tiny},
            xtick pos = bottom,
            ytick pos = left,
            xmin = 1,
            ymin = 0.4,
            xmax = 40,
            ymax = 1,
            ymajorgrids = true,
            major grid style = {line width=.2pt, draw=gray!50},
            legend cell align = left,
            width = 2*\axisdefaultheight,
            height = \axisdefaultheight
        ]
            
            \addplot[domain = {1:40},
                     samples = 50, 
                     color = red!60!white,
                     smooth,
                     ultra thick]
                     {1/4 + 1/(4*x) * (x + 2*sqrt(x))};
            \addlegendentry{Conjecture}
            
            \addplot[dashed,
                     domain = {1:7},
                     samples = 10, 
                     color = cyan!90!blue,
                     smooth,
                     ultra thick]
                     {1/2 + 1/(2*sqrt(x))};
            \addplot[dashed,
                     domain = {7:40},
                     samples = 10, 
                     color = cyan!90!blue,
                     smooth,
                     ultra thick,
                     forget plot]
                     {5/8 + 1/(2*(x-2)) - 1/(4*x)};
            \addlegendentry{NPA 1}
            
            \addplot+[mark=*,
                      only marks,
                      mark options={
                        scale=1.2,
                        fill=teal,
                        draw=teal,
                      }] 
	                  coordinates {
            		    (8,0.6768)
            		    (12,0.6443)
                        (16,0.6250)
                        (17,0.6213)
                        (25,0.6062)
                        (35,0.5980)
                      };
            \addplot+[mark=*,
                      only marks,
                      mark options={
                        scale=1.8,
                        fill=teal,
                        draw=teal,
                      },
                      forget plot] 
	                  coordinates {
                        (18,0.6182)
                      };
            \addlegendentry{NPA 2}

            \addplot+[mark=square*,
                      only marks,
                      mark options={
                        scale=1,
                        fill=black!90!white,
                        draw=black!90!white,
                      }] 
	                  coordinates {
                        (18,0.6178)
                      };
            \addlegendentry{seesaw}

            \coordinate (insetStart) at (18.2, {0.63});
            \coordinate (insetEnd) at (19, 0.75);
            \draw[-, semithick] (insetStart.center) -- ([xshift=10]insetEnd);
        \end{axis}
        \begin{axis}[
            tiny,
            at={(insetEnd)},
            anchor = south west,
            axis background/.style = {fill=white},
            xmin = 17.991,
            xmax = 18.009,
            ymin = 0.6175,
            ymax = 0.6187,
            axis on top,
            axis line style = semithick,
            ticks = none,
        ]
            \addplot[domain = {17.8:18.2},
                     samples = 4, 
                     color = red!60!white,
                     smooth,
                     line width = 3pt]
                     {1/4 + 1/(4*x) * (x + 2*sqrt(x))};
                     
            \addplot+[mark=*,
                      only marks,
                      mark options={
                        scale=1.5,
                        fill=teal,
                        draw=teal,
                      },
                      forget plot] 
	                  coordinates {
                        (18,0.6182)
                      };
                      
            \addplot+[mark=square*,
                      only marks,
                      mark options={
                        scale=1.5,
                        fill=black!90!white,
                        draw=black!90!white,
                      }] 
	                  coordinates {
                        (18,0.6178)
                      };
        \end{axis}
    \end{tikzpicture}
    \ifabstract
        }
    \else
    \fi
    \caption{Upper-bounds on the winning probability in the no-cloning game involving three adversaries $(\Pirate, \Bob, \Charlie)$ for our candidate scheme for Uncloneable Encryption with $K$ keys. 
    The solid line (red) is the conjectured upper-bound, 
    the dashed line (cyan) corresponds to the upper-bound derived from NPA level~1, 
    the circles (teal) are the numerical upper-bounds obtained from NPA level~2, 
    and the square (black) is the numerical result obtained using the seesaw optimization method on $K=18$.}
    \label{fig:results}
\end{figure}

\begin{result}[Cyan in \Cref*{fig:results}]
    The conjecture is confirmed for $K \leq 7$ using both an explicit Sum-of-Squares decomposition and an analytical level-1 NPA proof.
\end{result}

\begin{result}[Teal in \Cref*{fig:results}]
    The conjecture is numerically confirmed for $K \leq 17$ using a level-2 NPA optimisation\footnote{For larger~$K$, we would need the next levels of the NPA hierarchy---computational limitations prevented us from extending the numerics.}.
\end{result}

Then, using the analytic solution of the first level of the NPA hierarchy and taking its limit in~$K$, we also derive the following upper-bound:

\begin{result}[Cyan in \Cref*{fig:results}]
    Asymptotically, the winning probability of our candidate scheme for the no-cloning game is upper-bounded by $\sfrac{5}{8}$.
\end{result}

Additionally, we use an alternating optimization problem algorithm, known as seesaw methods, to obtain lower-bounds. We explore the case $K=18$ and found no instances that violate the conjecture:

\begin{result}[Black in \Cref*{fig:results}]
    For $K=18$, no violation of the conjecture is found using seesaw methods.
\end{result}

Finally, although we are unable to prove full uncloneable-indistinguishability, we prove that conventional indistinguishability also holds, in a relaxed sense, for our candidate scheme (see\ifabstract{ the main manuscript}\else{~\Cref{sec:indistinguishability}}\fi).

\ifabstract
\else

\subsection{Future Work}

Our work introduces a candidate scheme supported by strong evidence, highlighting its significance in the field of unconditional uncloneable encryption.
However, several important questions remain unresolved, which we outline below:

\paragraph*{Question 1: Proving the Conjecture for Large $K$.} In this paper, we provide an analytical proof for the conjecture when $K \leq 7$ and support it numerically for $K \leq 17$. Extending this proof to arbitrarily large values of $K$ would be a significant advancement, as it would mark the first unconditional proof of uncloneable encryption in the plain model.

\paragraph*{Question 2: Achieving Stronger Security.} \!Our proposed protocol achieves a quad\-ra\-tic security rate while the original definition in~\cite{BL20} specifies a security with exponential decay in the security parameter. A key open challenge is to develop a protocol that achieves this stronger security.

\subsection{Organization of the Paper}

The remainder of this paper is organized as follows. \Cref{sec:preliminaries} contains the necessary notation and preliminaries concerning states, measurements, and channels in quantum mechanics, as well as the formal definition of uncloneable encryption and its relation to monogamy-of-entanglement games. In~\Cref{sec:scheme}, we present our scheme for uncloneable encryption and establish a conjecture regarding its uncloneable-indistinguishable security based on  operator theory.
Finally, in~\Cref{sec:analytical-and-numerical-results}, we propose analytical proofs and numerical evidence of the conjecture in the first cases.
\fi
\section{Preliminaries} \label{sec:preliminaries}

We use $\NN_+$ to represent the set of positive natural numbers $\NN \setminus \set{0}$, and $[n]$ to denote the set $\set{1, \ldots, n}$. The indicator function is written as $\mathbbm{1}_{\set{\cdot}}$, and $\negl(\lambda)$ refers to a negligible function in $\lambda \in \NN_+$ \footnote{$ \forall n \in \NN, \exists N \in \NN$ such that $\forall \lambda > N$ we have $\negl(\lambda) \leq \lambda^{\shortminus n}$~\cite{KL07}.}.

In this paper, all Hilbert spaces are considered to be finite-dimensional. Given a Hilbert space $\Hilbert \simeq \CC^d$, we use $\M_d \simeq \B(\Hilbert)$ for the set of matrices (or operators) acting on $\Hilbert$, and $I_d$ for the identity matrix on this space. The trace on $\Hilbert$ is written as $\Tr[\cdot]$. An operator $M$ is positive semi-definite if and only if it is Hermitian (\ie $M^* = M$) with (real) non-negative eigenvalues. In this case, we write $M \succeq 0$. Given two Hermitian operators $A$ and $B$, we write $A \succeq B$ if $A - B \succeq 0$. The norm on $\Hilbert$ is denoted by $\norm{\cdot}$, and the operator norm on $\M_d$ by $\operatornorm{\cdot}$. Given two operators $A$ and $B$ acting on $\Hilbert$, the notation $\commutator{A}{B}$ represents the commutator $AB - BA$, and $\anticommutator{A}{B}$ represents the anti-commutator $AB + BA$. The Frobenius inner product of two operators $A$ and $B$ acting on $\Hilbert$ is defined by $\langle A, B \rangle \coloneqq \Tr[A^*B]$.

A Hilbert space $\Hilbert \simeq \CC^d$ is called a \emph{quantum system}, and its density matrices (\ie positive semi-definite unit-trace matrices) are the \emph{quantum states}, forming the following set:
\begin{equation*}
    \D_d \coloneqq \big\{ \rho \in \M_d \;:\; \Tr[\rho] = 1 \text{ and } \rho \succeq 0 \big\}\,.
\end{equation*}
The set of density matrices $\D_d$ is convex, and its extreme points are precisely the unit-rank projectors $\ketbra{\psi}{\psi}$, for some unit vector $\ket{\psi} \in \Hilbert$, referred to as \emph{pure quantum states}. On a composite system $\Hilbert^{\otimes n}$ with $n$ parties, we write $\Tr_i$ for the \emph{partial trace} on the Hilbert space corresponding to the $i$-th subsystem:
\begin{equation*}
    \Tr_i \coloneqq I_d \otimes \cdots \otimes I_d \otimes \underbracket[0.75pt]{\,\Tr\,}_{i\text{-th}} \otimes I_d \otimes \cdots \otimes I_d\,.
\end{equation*}
In the case of a composite system with different dimensions, we also use partial trace notation, indexed by the dimension to trace, \eg on $\Hilbert \simeq \CC^d \otimes \CC^D$, the notation $\Tr_D$ stands for the partial trace on the second tensor factor.
A \emph{quantum channel} $\Phi: \M_d \to \M_D$ is a Completely Positive and Trace Preserving (\CPTP) linear map. We use the notation $\Phi_i$ to denote the $i$-th marginal of the quantum channel: $\Phi_i \coloneqq \Tr_i \circ\, \Phi$. A Positive Operator-Valued Measure (\POVM) on a Hilbert space $\Hilbert \simeq \CC^d$ is a finite set of positive semi-definite operators $\set{M_i}$ that sum to the identity:~$\sum_i M_i = I_d$. A Projection-Valued Measure (\PVM) is a specific type of \POVM where $\set{M_i}$ is a set of orthogonal projections.

In this paper, we use the standard \emph{Pauli matrices}, which are Hermitian and unitary operators that serve as a basis for the space of $2 \times 2$ Hermitian matrices. They are defined as:
\begin{equation*}
    X \coloneqq \begin{bmatrix} 0 & 1 \\ 1 & 0 \end{bmatrix}, \quad
    Y \coloneqq \begin{bmatrix} 0 & -i \\ i & 0 \end{bmatrix}, \quad
    Z \coloneqq \begin{bmatrix} 1 & 0 \\ 0 & -1 \end{bmatrix}, \quad
    I \coloneqq \begin{bmatrix} 1 & 0 \\ 0 & 1 \end{bmatrix}.
\end{equation*}

\subsection{Uncloneable Encryption} \label{sec:uncloneable_encryption}

\emph{Uncloneable encryption}, as introduced in~\cite{BL20}, is a symmetric-key encryption scheme with \emph{quantum} ciphertext to prevent unauthorized replication.

\begin{definition}[Quantum Encryption of Classical Messages]\label{def:uncloneable_encryption}
    A \emph{quantum encryption of classical messages} (QECM) is a triple of efficient quantum algorithms $(\Gen, \Enc, \Dec)$ such that:
    \begin{itemize}
        \item $\Gen(1^\lambda) \mapsto k$: takes a security parameter $1^\lambda$ as input, and outputs a classical secret key $k \in [K] \subseteq \set{0,1}^\lambda$.\footnote{The notation ``$1^\lambda = 1 1 \cdots 1$'' means that the generation of the key $k \in [K]$ is efficient with respect to the security parameter~$\lambda$.}
        \item $\Enc(m,k) \mapsto \rho$: takes a message $m\in\set{0,1}$ and a secret key $k \in [K] \subseteq \set{0,1}^\lambda$ as inputs, and outputs a quantum ciphertext $\rho \in \B(\Hilbert)$.
        \item $\Dec(\rho,k) \mapsto m$: takes a ciphertext $\rho \in \B(\Hilbert)$ and a secret key $k \in [K] \subseteq  \set{0,1}^\lambda$ as inputs, and outputs a message $m\in\{0,1\}$.
    \end{itemize}
\end{definition}

The following definition requires that, for all keys that can be generated by the scheme, a valid ciphertext is always correctly decrypted, up to negligible probability.

\begin{definition}[Correctness] \label{def:uncloneable_encryption_correctness}
     A scheme $(\Gen, \Enc, \Dec)$ is said to be \emph{correct} if for any security parameter $\lambda$, for any secret key $k = \Gen(1^\lambda)$, and for any message $m$, we have
    \begin{equation*}
        \Pr\ofAlt[\Big]{ \Dec\of[\big]{ \Enc(m, k), k } = m } = 1 - \negl(\lambda)\,,
    \end{equation*}
    where $\negl(\lambda)$ is a \emph{negligible function} of $\lambda$.
\end{definition}

Now that we have a formal definition of the scheme  and its correctness, we can define the security property called \emph{uncloneable-indistinguishability}. This security property is formalized in terms of a no-cloning game for a $1$-bit message (see~\Cref{fig:uncloneable_encryption}).

\begin{definition}[No-Cloning Game] \label{def:cloning_game}
    Given an uncloneable encryption scheme $(\Gen, \Enc, \Dec)$ and a security parameter $\lambda$, the \emph{no-cloning game} involving a challenger $\Alice$ and three adversaries $(\Pirate, \Bob, \Charlie)$ is defined by the following procedure:
    \begin{enumerate}
        \item A challenger $\Alice$ generates a key $k \leftarrow \Gen(1^\lambda)$ and a message $m \in \set{0, 1}$ uniformly at random, and sends the quantum state $\rho \leftarrow \Enc(m,k)$ to $\Pirate$.
        \item The adversary $\Pirate$ applies a \CPTP~map $\Phi: \B(\Hilbert_\Alice) \to \B(\Hilbert_{\Bob} \otimes \Hilbert_{\Charlie})$ on the state $\rho$ to obtain the bipartite state $\Phi(\rho)$, and sends to $\Bob$ and $\Charlie$ their respective register.
        \item The adversaries $\Bob$ and $\Charlie$ receive the secret key $k$ and measure their state using two {\POVM}s $\set{B_k}$ and $\set{C_k}$, to output $m_\Bob, m_\Charlie \in \set{0,1}$.
        \item The adversaries $(\Pirate, \Bob, \Charlie)$ win if $m = m_\Bob = m_\Charlie$.
    \end{enumerate}
\end{definition}

\begin{remark}
 Since this work concerns the question of the \emph{uncloneable bit}, we consider in \Cref{def:cloning_game} a game with single-bit messages $m \in \{0,1\}$. See \cite{BL20} for the more general case $m \in \{0,1\}^n$.
\end{remark}

The uncloneable-indistinguishability security was introduced by Broadbent and Lord in~\cite{BL20}, combining two other notions of security: uncloneable security, which they also introduced, and indistinguishable security~\cite{GM84,ABF+16}. In contrast to~\cite{BL20}, we propose a stronger security definition, wherein adversaries are unbounded.

\begin{definition}[Uncloneable-Indistinguishable Security] \label{def:uncloneable_encryption_security}
    An uncloneable encryption scheme 
    satisfies \emph{strong uncloneable-indistinguishable security} if for any security parameter $\lambda$ and for any adversaries $(\Pirate, \Bob, \Charlie)$ the no-cloning game from \Cref{def:cloning_game} cannot succeed with probability more than $\sfrac{1}{2} + \negl(\lambda)$. If instead the adversaries cannot succeed with probability more than $\sfrac{1}{2} + f(\lambda)$ for some function $f: \RR \to \RR$ such that $\lim_\lambda f(\lambda) = 0$, we simply say that the scheme satisfies \emph{weak uncloneable-indistinguishable security}.
\end{definition}

\begin{remark}
    As sketched in~\cite{BL20}, uncloneable-indistinguishable security implies indistinguishable security in terms of a conventional encryption scheme~\cite{GM84,ABF+16}. For completeness, we provide a full proof of the indistinguishable security of our candidate scheme in~\Cref{sec:indistinguishability}.
\end{remark}

\subsection{Bound for Unconditional No-Cloning Game}

By definition of the no-cloning game (see  \Cref{def:cloning_game}), the winning probability of the three adversaries~$(\Pirate, \Bob, \Charlie)$ is equal to
\begin{align*}
    \Pr[\text{win}] &= \sup_{\substack{\Phi \\ B_{i|k}, C_{j|k}}} \: \E_{\substack{m \in \set{0,1} \\ k \leftarrow \Gen(1^\lambda)}} \: \sum_{i,j \in \set{0,1}} \mathbbm{1}_{\set{i = j = m}}\Tr\ofAlt[\big]{ \Phi(\rho_{m, k}) (B_{i|k} \otimes C_{j|k}) } \\
    &= \sup_{\Phi, B, C} \E_{m, k} \Tr\ofAlt[\big]{ \Phi(\rho_{m, k}) (B_{m|k} \otimes C_{m|k}) }\,,
\end{align*}
with $\rho_{m, k} \coloneqq \Enc(m, k)$, and where the expected values are taken with respect to the uniform measures, and the supremum is taken over all \CPTP~$\Phi: \B(\Hilbert_\Alice) \to \B(\Hilbert_{\Bob} \otimes \Hilbert_{\Charlie})$ (for all finite-dimensional Hilbert spaces $\Hilbert_\Bob$ and $\Hilbert_\Charlie$), and all families of \POVM~$\set{B_{i|k}}$ and $\set{C_{j|k}}$.

Given a \CPTP~map $\Phi: \B(\CC^d) \to \B(\CC^{d'})$, the Choi matrix~\cite{Wat18} $C_\Phi \in \M_d \otimes \M_{d'}$ is defined by
\begin{equation*}
    C_\Phi \coloneqq \sum_{i,j \in [d]} \ketbra{i}{j} \otimes \Phi\of[\big]{ \ketbra{i}{j} }\,, 
\end{equation*}
and satisfies the partial trace equality
\begin{equation} \label{eq:Choi_matrix_equation}
    \Phi(X) = \Tr_d\ofAlt[\big]{ C_\Phi (X^\T \otimes I_{d'}) }\,, \qquad \forall X \in \M_d\,,
\end{equation}
and Choi's theorem~\cite{Cho75}:
\begin{equation} \label{thm:Choi}
    \Phi \in \CPTP \qquad \Longleftrightarrow \qquad C_\Phi \succeq 0 \quad \text{ and }\quad \Tr_{d'}[C_\Phi] = I_d.
\end{equation}

Using the Choi matrix of the \CPTP~$\Phi: \B(\Hilbert_\Alice) \to \B(\Hilbert_{\Bob} \otimes \Hilbert_{\Charlie})$, with $\Hilbert_\Alice \simeq \CC^d$ and $\Hilbert_\Bob \otimes \Hilbert_\Charlie \simeq \CC^{d'}$, and the partial trace equality~\cref{eq:Choi_matrix_equation}, the winning probability of the no-cloning game becomes
\begin{equation*}
    \Pr[\text{win}] = \sup_{C_\Phi, B, C} \E_{m, k} \Tr\ofAlt[\big]{ C_\Phi (\rho^\T_{m, k} \otimes B_{m|k} \otimes C_{m|k}) }\,,
\end{equation*}
where the supremum is now taken over all $C_\Phi \succeq 0$ such that $\Tr_{d'}[C_\Phi] = I_d$. This last condition can be relaxed to $\Tr[C_\Phi] = d$, giving the first upper-bound on the winning probability:
\begin{align}
    \Pr[\text{win}] &\leq \sup_{C_\Phi, B, C} \E_{m, k} \Tr\ofAlt[\big]{ \tfrac{1}{d} \cdot C_\Phi (d \cdot \rho^\T_{m, k} \otimes B_{m|k} \otimes C_{m|k}) } \nonumber \\
    &= \sup_{\sigma, B, C} \E_{m, k} \Tr\ofAlt[\big]{ \sigma\, (d \cdot \rho^\T_{m, k} \otimes B_{m|k} \otimes C_{m|k}) }\,, \label{eq:winning_probability_upperbound_1}
\end{align}
where the first supremum is taken over all $C_\Phi \succeq 0$, and the last supremum is taken over all $\sigma \succeq 0$ such that $\Tr[\sigma] = 1$, \ie a quantum state. 
The upper-bound~\cref{eq:winning_probability_upperbound_1} is saturated on pure quantum states $\ketbra{\psi}{\psi}$, as pure quantum states form the extreme points of the convex set of quantum states, and the optimization is linear in $\sigma$. Hence
\begin{align}
    \Pr[\text{win}] &\leq \sup_{\psi, B, C} \bra{\psi} \E_{m, k}\ofAlt[\big]{ d \cdot \rho^\T_{m, k} \otimes B_{m|k} \otimes C_{m|k} } \ket{\psi} \label{eq:winning_probability_upperbound_2} \\
    &\leq \sup_{\psi, B, C} \big| \bra{\psi} \E_{m, k}\ofAlt[\big]{ d \cdot \rho^\T_{m, k} \otimes B_{m|k} \otimes C_{m|k} } \ket{\psi} \big| \label{eq:winning_probability_upperbound_3}  \\
    &= \sup_{B, C} \operatornorm[\Big]{ \E_{m, k}\ofAlt[\big]{ d \cdot \rho^\T_{m, k} \otimes B_{m|k} \otimes C_{m|k} } }, \label{eq:winning_probability_upperbound_4}
\end{align}
where the first two suprema are taken over all $\norm{\psi} = 1$ and the last equality holds because $d \cdot \rho^\T_{m, k} \otimes B_{m|k} \otimes C_{m|k}$ is a Hermitian operator.

By Naimark’s theorem~\cite{Per90b}, we may always assume that the {\POVM}s are in fact {\PVM}s up to increasing the dimension. 
Moreover, any adversaries $(\Pirate, \Bob, \Charlie)$ can be assumed to be \emph{symmetric}, \ie $\Hilbert_\Bob = \Hilbert_\Charlie$ and $\set{B_{i|k}} = \set{C_{j|k}}$, by taking $\Bob$'s and $\Charlie$’s spaces to be $\Hilbert_\Bob \oplus \Hilbert_\Charlie$, the \PVM~operators to be $\set{B_{i|k} \oplus C_{i|k}}$,\footnote{We do not consider the cases where $i\neq j$ because then the winning probability of the adversaries is zero.} and the \CPTP~map $\Phi$ that sends $\Bob$’s part of the output to the first component of the direct sum space and $\Charlie$’s to the second component. Hence, the upper-bound~\cref{eq:winning_probability_upperbound_4} becomes
\begin{equation} \label{eq:winning_probability_upperbound_5}
    \Pr[\text{win}] \leq \sup_{M} \operatornorm[\Big]{ \E_{m, k}\ofAlt[\big]{ d \cdot \rho^\T_{m, k} \otimes M_{m|k} \otimes M_{m|k} } },
\end{equation}
where the supremum is taken over all {\PVM}s $\set{M_{i|k}}$.

\subsection{Link with Monogamy-of-Entanglement Games}

Limitations imposed by the \emph{no-cloning theorem}~\cite{Wer98,KW99} are closely related to limitations due to \emph{monogamy-of-entanglement} (MoE)~\cite{CKW00,KW04}. This explains the close relationship between no-cloning games and monogamy-of-entanglement games (see~\Cref{fig:MoE_Game}), introduced in the foundational paper~\cite{TFKW13}.

\begin{figure}[t]
    \centering
    \begin{tikzpicture}
        \newcommand{\fillproportion}{20}
        \tikzset{
            block/.style={draw, minimum width=3em, minimum height=3em, font=\sffamily\Large, circle, line width=1mm},
            arrow/.style={-stealth, ultra thick},
            double_arrow/.style={-{Implies[]}, very thick, double},
        }
    
        \draw[rounded corners=15pt, thick, draw=Gray, fill=Gray!\fillproportion, line width=0.5mm] ([xshift=-2.4em, yshift=-7em] A) rectangle node[font=\Large, xshift=0em, yshift=-6.5em] {$\sigma_{\scriptscriptstyle \Alice \Bob \Charlie}$} ([xshift=-5.7em, yshift=3.5em] B);
        
        \node[block, draw=DarkRed, fill=DarkRed!\fillproportion] (A) {A};
        \node[block, draw=DarkGreen, fill=DarkGreen!\fillproportion, above right=1.5em and 6em of A] (B) {B};
        \node[block, draw=DarkBlue, fill=DarkBlue!\fillproportion, below right=1.5em and 6em of A] (C) {C};

        \draw[dashed, very thick] ([xshift=5em] A.center) -- ([xshift=14em] A.center);
        
        \draw[double_arrow] (A.30) -- node[above, yshift=2pt] {$k$} ([xshift=-2pt] B.200);
        \draw[double_arrow] (A.330) -- node[below, yshift=-2pt] {$k$} ([xshift=-2pt] C.160);
    
        \draw[double_arrow] (A.west) -- ([xshift=-2.5em] A.west) node [label = left:{$m_{\Alice}$}] {};
        
        \draw[double_arrow] (B.east) -- ([xshift=2.5em] B.east) node [label = right:{$m_{\Bob}$}] {};
        
        \draw[double_arrow] (C.east) -- ([xshift=2.5em] C.east) node [label = right:{$m_{\Charlie}$}] {};
    \end{tikzpicture}
    \caption{\emph{Monogamy-of-Entanglement Game.} Alice~(\Alice), Bob~(\Bob), and Charlie~(\Charlie) share a quantum state~$\sigma_{\Alice\Bob\Charlie}$. 
    Given a classical random key~$k\in\set{1,..,K}$, 
    Alice performs a measurement $\set{A_{m|k}}_m$ and obtains~$m_\Alice\in\set{0,1}$. 
    Using the same key $k$, the players Bob and Charlie perform respective measurements $\set{B_{m|k}}_m$ and $\set{C_{m|k}}_m$, 
    and we say that they win the game if both of them recover the exact same message as Alice, \ie if $m_\Alice\!=\!m_\Bob\!=\!m_\Charlie$.}
    \label{fig:MoE_Game}
\end{figure}
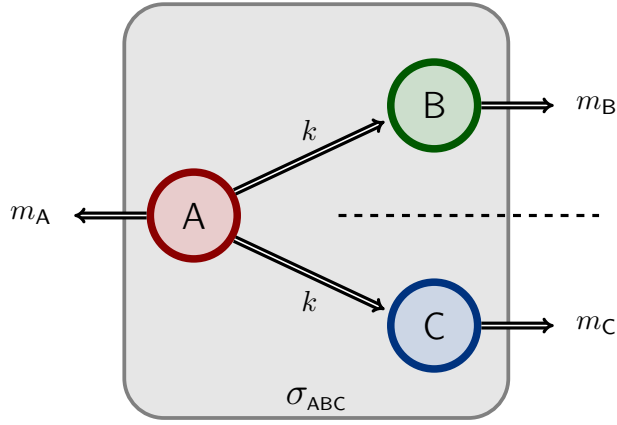

An MoE game involves three parties: a referee, denoted as Alice ($\Alice$), and two cooperating players, Bob ($\Bob$) and Charlie ($\Charlie$). The game proceeds as follows. Alice initially has a fixed family of positive operator-valued measures. Before the game's start, Bob and Charlie jointly agree upon a strategy, preparing a tripartite quantum state and sending the corresponding marginal state to referee Alice.
Once the game begins, the players are space-like separated, meaning that any form of communication is no longer allowed. The referee Alice picks a \POVM uniformly at random and performs a measurement on her quantum state to produce some classical outcome $m_\Alice$. She publicly announces the \POVM she used to Bob and Charlie. Their goal is then to independently recover $m_\Alice$, by performing some measurements on their respective states, which result in classical outcomes $m_\Bob$ and $m_\Charlie$, respectively.
Finally, the referee $\Alice$ declares that the players Bob and Charlie won the game if both of them recover Alice's outcome, \ie if exactly $m_\Alice=m_\Bob=m_\Charlie$.

When the \POVM{}s fixed by the referee Alice are the ones of conjugate coding~\cite{Wie83}, it is easy to show that the winning probability for the MoE game is at least $\frac{1}{2} + \frac{1}{2\sqrt2}$: the players Bob and Charlie share no entanglement, send to Alice the state $\cos \frac{\pi}{8} \ket{0} + \sin \frac{\pi}{8} \ket{1}$ from the Breidbart basis, and always output $m_\Bob = m_\Charlie = 0$, which is indeed equal to $m_\Alice$ with probability $(\cos\frac\pi8)^2 = \frac12 + \frac{1}{2\sqrt2}$.\footnote{Alternatively, this can be seen as a consequence of an entropic uncertainty relation due to Deutsch~\cite{Deu83}.}
Interestingly, this value is actually the optimal one, and if we repeat $n$ times this game in parallel, then the optimal winning probability is precisely $\of[\big]{ \frac{1}{2} + \frac{1}{2\sqrt2} }^n$~\cite{TFKW13}.

In the absence of MoE, the players in an  MoE game would be able to perfectly win the game by considering a state maximally entangled between the referee $(\Alice)$ and the two players $(\Bob)$ and $(\Charlie)$, \ie a tripartite state $\rho_{\Alice\Bob\Charlie}$ with reduced states
\begin{equation*}
    \rho_{\Alice\Bob} = \frac{1}{2} \sum_{ij} \ketbra{ii}{jj} \qquad \text{ and } \qquad \rho_{\Alice\Charlie} = \frac{1}{2} \sum_{ij} \ketbra{ii}{jj}\,.
\end{equation*}
The existence of such a state is precluded by the MoE principle, which stipulates that no quantum state can be maximally entangled with more than two parties simultaneously. This principle also implies the no-cloning principle which states that it is not possible, in general, to copy a quantum state. If this were possible, then a quantum channel $\Phi: \B(\Hilbert) \to \B(\Hilbert \otimes \Hilbert)$, with marginals $\Phi_1 (\rho) = \rho$ and $\Phi_2 (\rho) = \rho$ for all $\rho \in \B(\Hilbert)$, would exist. But then its Choi matrix $C_{\Phi}$ is such that the marginals become
\begin{equation*}
    C_{\Phi_1} = \sum_{ij} \ketbra{ii}{jj} \qquad \text{ and } \qquad C_{\Phi_2} = \sum_{ij} \ketbra{ii}{jj},
\end{equation*}
that is, both proportional to a maximally entangled state, which is not possible due to the MoE and Choi's theorem~\cref{thm:Choi}.

This relationship between MoE and no-cloning is reflected in the winning probability of MoE games and no-cloning games. The winning probability of the MoE game can be described as follows:
\begin{equation} \label{eq:winning_probability_MoE_game}
    \Pr\big[\text{win}\big]
    \,=\,
    \sup_{\sigma, B, C}\, \frac{1}{K} \sum_{m, k} \Tr\ofAlt[\big]{ \sigma\, (A_{m|k} \otimes B_{m|k} \otimes C_{m|k}) }\,,
\end{equation}
where $\sigma$ is a quantum state in $\Hilbert_\Alice\otimes\Hilbert_\Bob\otimes\Hilbert_\Charlie$, and the sets $\set{A_{m|k}}_k$, $\set{B_{m|k}}_k$, $\set{C_{m|k}}_k$ are \POVM{}s on respectively $\Hilbert_\Alice$, $\Hilbert_\Bob$, $\Hilbert_\Charlie$. 
When $m\in\{0,1\}$, this winning probability is precisely the expression we have in~\cref{eq:winning_probability_upperbound_1} when $A_{m|k}=(d/2) \cdot \rho^\T_{m, k}$ is a \POVM, \ie when the  normalization condition $\sum_m \rho_{m, k} = 2\,I_d/d$ holds for all $k$.

\section{Candidate Scheme} \label{sec:scheme}

In this section, we propose a new encryption scheme based on the Clifford algebra. In the next section, we show that it allows us to recover previously known results for $K=2$, and we provide strong numerical evidence that this scheme achieves the uncloneable security for large~$K$.

\subsection{Clifford Algebra}
\label{sec:Clifford_Algebra}

For our uncloneable encryption scheme, we  make use of the structure of Clifford algebra~\cite{Lou01, DL03}, widely used in quantum information theory, in particular as a generalization of the Bloch sphere representation~\cite{Die06, WW08b}, in operator algebras theory~\cite{Pis03, BN18}, and in non-local games~\cite{Slo11,Ost16}.
This choice is motivated by the fact that the Clifford generators used below, were shown in \cite{BN18} to give rise to maximally incompatible binary observables. Thus, the measurements associated with different keys are extremal from the viewpoint of joint measurability, making them natural candidates for an uncloneable encryption scheme, where security should arise from the impossibility of simultaneously reproducing such incompatible measurement outcomes.

For any integer $n \in \NN_+$, the free real associative algebra generated by $\Gamma_1, \ldots, \Gamma_{n}$, and subject to the anti-commutation relations:
\begin{equation*}
    \anticommutator{\Gamma_i}{\Gamma_j} \coloneqq \Gamma_i \Gamma_j + \Gamma_j \Gamma_i = 2\, \delta_{ij} \mathbbm{1}\,,
\end{equation*}
is called the \emph{Clifford algebra}, and denoted $\mathbf{CL}_n$. Any irreducible representation of the Clifford algebras $\mathbf{CL}_{2n}$ and $\mathbf{CL}_{2n+1}$ can be constructed explicitly by the following $2n + 1$ Pauli string operators acting on~$\CC^{2^n}$, called the Jordan-Wigner transformation~\cite{JW93}:
\begin{align*}
    \sigma_{n,2i-1} &= X^{\otimes (i-1)} \otimes Y \otimes I^{\otimes (n-i)}, \quad i \in \set{1, \ldots, n}\,, \\
    \sigma_{n,2i} &= X^{\otimes (i-1)} \otimes Z \otimes I^{\otimes (n-i)}, \quad i \in \set{1, \ldots, n}\,, \\
    \sigma_{n,2n+1} &= X^{\otimes n}\,,
\end{align*}
These operators are traceless, and as Hermitian unitaries, they have spectrum $\pm 1$. Mapping the generators $\Gamma_1, \ldots, \Gamma_{2n}$ to $\sigma_{n,1}, \ldots, \sigma_{n,2n}$  gives the irreducible representation of the even Clifford algebra $\mathbf{CL}_{2n}$. Mapping the generators $\Gamma_1, \ldots, \Gamma_{2n+1}$ to $\sigma_{n,1}, \ldots, \sigma_{n,2n+1}$ or $\sigma_{n,1}, \ldots, \shortminus \sigma_{n,2n+1}$ gives the two inequivalent irreducible representations of the odd Clifford algebra $\mathbf{CL}_{2n+1}$.

An important property of the Clifford algebra is that $2n$ Hermitian anti-commuting operators $\Gamma_1, \ldots, \Gamma_{2n}$ can be view as orthogonal vectors, for the normalized Frobenius inner product, forming a basis for the $2n$-dimensional real vector space they span. Indeed, given $u, v \in \RR^{2n}$ we have
\begin{equation*}
    \bigg\langle \sum^{2n}_{i=1} u_i \Gamma_i, \sum^{2n}_{j=1} v_j \Gamma_j \bigg\rangle = \frac{1}{2n} \of[\bigg]{ \sum^{2n}_{i=1} u_i v_i \underbracket[0.75pt]{\,\Tr\of[\big]{\Gamma^2_i}\,}_{=2n} + \sum^{2n}_{\substack{i,j=1 \\ i < j}} (u_i v_j - u_j v_i) \underbracket[0.75pt]{\,\Tr\of[\big]{\Gamma_i \Gamma_j}\,}_{=0} } = \langle u, v \rangle\,.
\end{equation*}
In particular for a unit vector $v \in \RR^{2n}$, the operator $\sum^{2n}_{i=1} v_i \Gamma_i$ is a Hermitian unitary. Using this property, given any vector $v \in \RR^{2n}$, we have
\begin{equation}   \label{eq:formula-with-norm-2}
    \operatornorm[\bigg]{\sum^{2n}_{i=1} v_i \Gamma_i} = \norm{v}_2\,,
\end{equation}
and taking the all-ones vector, we get $\operatornorm[\big]{\sum^{2n}_{i=1} \Gamma_i} = \sqrt{2n}$.

\subsection{Definition of the Scheme}
        
We  now define our candidate scheme for a weak uncloneable bit, using the Clifford algebra introduced in \Cref{sec:Clifford_Algebra}.

\begin{definition}[Candidate Uncloneable Bit Encryption] \label{def:Clifford_uncloneable_encryption}
    Let $\lambda \in \NN_+$. We define our candidate uncloneable bit scheme. $\Gen(1^\lambda)$ samples uniformly a key  $k \in [K]$  with $K = 2\lambda$ ($\lambda$ even) or $K = 2\lambda +1 $ ($\lambda$ odd). The algorithms $\Enc$ and $\Dec$ are defined in \Cref{alg:enc} and \Cref{alg:dec}, respectively. They use the irreducible representation of the generators $\Gamma_1, \ldots, \Gamma_K$ of a Clifford algebra (\Cref{sec:Clifford_Algebra})\footnote{When $\lambda$ is odd, we choose $\sigma_{\lambda, 1}, \ldots, \sigma_{\lambda, 2\lambda+1}$, even if both irreducible representations form a valid scheme.}, acting on $\CC^{d}$ where $d=2^\lambda$.

    \noindent
    \begin{minipage}[t]{0.5\linewidth}
        \null 
        \begin{algorithm}[H]
            \DontPrintSemicolon
            \caption{Encryption $\Enc(m, k)$.} \label{alg:enc}
        	\SetKwInOut{Input}{Input}
        	\SetKwInOut{Output}{Output}
        
        	\vbox to 8.5em {
        	   \Input{A message $m \in \set{0, 1}$ and a key $k \in [K]$.}
        	   \Output{A state $\rho_{m,k}$ acting on $\CC^d$.}
        
                Compute $\rho_{m,k} = \frac{2}{d}\, \frac{I_d + (\shortminus 1)^m \Gamma_k}{2}$, the normalized projector onto the $(\shortminus 1)^m$-eigenspace of $\Gamma_k$.\;
        	   Output $\rho_{m,k}$.\;
            }
        \end{algorithm}
    \end{minipage}
    \begin{minipage}[t]{0.5\linewidth}
        \null
        \begin{algorithm}[H]
            \DontPrintSemicolon
            \caption{Decryption $\Dec(\rho, k)$.}\label{alg:dec}
        	\SetKwInOut{Input}{Input}
        	\SetKwInOut{Output}{Output}
        
        	\vbox to 8.5em {
        	   \Input{A state $\rho$ acting on $\CC^d$ and a key $k \in [K]$.}
        	   \Output{A message $m \in \set{0, 1}$.}
        
                Measure $\rho$ in the eigenbasis of $\Gamma_k$, with the \PVM $\set{\frac{1}{2}\of{I_d + (\shortminus 1)^i \Gamma_k}}_i$. Call $m$ the outcome.\;
        	   Output $m$.\;
            }
        \end{algorithm}
    \end{minipage}
\end{definition}

\begin{remark}
    The correctness~\Cref{def:uncloneable_encryption_correctness} is immediate since the operators $\frac{I_d + \Gamma_k}{2}$ and $\frac{I_d - \Gamma_k}{2}$ are orthogonal. Thus the measurement of $\rho_{m,k}$ is
    \begin{equation*}
        \Tr \bigg[ 
            \of[\bigg]{\frac{I_d + (\shortminus 1)^i \Gamma_k}{2}} \; 
            \rho_{m,k} \; 
            \of[\bigg]{\frac{I_d + (\shortminus 1)^i \Gamma_k}{2}} 
        \bigg] =
        \begin{cases}
            1 &\text{if } i = m\,, \\
            0 &\text{otherwise}\,,
        \end{cases}
    \end{equation*}
    and $\Pr\ofAlt[\Big]{ \Dec\of[\big]{ \Enc(m, k), k } = m } = 1$.
\end{remark}

When $K=2$, with $\Gamma_1 \coloneqq X$ and $\Gamma_2 \coloneqq Z$, then
\begin{equation*}
    \frac{I_d + (\shortminus 1)^m \Gamma_1}{2} = H \ketbra{m}{m} H^* \qquad \text{ and } \qquad \frac{I_d + (\shortminus 1)^m \Gamma_2}{2} = \ketbra{m}{m}\,,
\end{equation*}
which is exactly the encryption used by the scheme defined in~\cite{BL20}. It should also be noted that in this case, the state $\rho_{m,k}$ is pure; however, it becomes mixed for larger $K$, as required by the impossibility results~\cite{MST21arxiv, AKL+22}.

\subsection{Conjecture}

The upper-bound on the winning probability of the no-cloning game for three adversaries $(\Pirate, \Bob, \Charlie)$ is given by~\cref{eq:winning_probability_upperbound_5}:
\begin{equation*}
    \Pr[\text{win}] \leq \sup_{M} \operatornorm[\Big]{ \E_{m, k}\ofAlt[\big]{ d \cdot \rho^\T_{m, k} \otimes M_{m|k} \otimes M_{m|k} } }\,,
\end{equation*}
where the supremum is taken over all \PVM~$\set{M_{i|k}}$. For all the keys $k$, the \PVM~$\set{M_{i|k}}$ are binary measurement operators of dimension $D$. We can write
\begin{equation*}
    M_{i|k} = \frac{I_D + (\shortminus 1)^i U_k}{2}\,,
\end{equation*}
for some Hermitian unitaries (observables) $U_k \coloneqq M_{0|k} - M_{1|k}$. Then the upper-bound~\cref{eq:winning_probability_upperbound_5} for Clifford uncloneable encryption~(\Cref{def:Clifford_uncloneable_encryption}) is
\begin{align}
    \Pr[\text{win}] &\leq \sup_{\{U_k\}} \operatornorm[\Bigg]{ \E_{m, k}\ofAlt[\bigg]{ d \cdot \frac{2}{d} \frac{I_d + (\shortminus 1)^m \Gamma_k}{2} \otimes \frac{I_D + (\shortminus 1)^m U_k}{2} \otimes \frac{I_D + (\shortminus 1)^m U_k}{2} } } \nonumber \\
    &\leq \sup_{\{U_k\}} \frac{1}{2 K} \operatornorm[\bigg]{ \sum_{m, k} d \cdot \frac{2}{d} \frac{I_d + (\shortminus 1)^m \Gamma_k}{2} \otimes \frac{I_D + (\shortminus 1)^m U_k}{2} \otimes \frac{I_D + (\shortminus 1)^m U_k}{2} } \nonumber  \\
    &\leq \sup_{\{U_k\}} \frac{1}{2 K} \bigg\lVert \frac{1}{4} \sum_{m, k} 
    \begin{aligned}[t]
      & \Big(I_d \otimes I_D \otimes I_D \\
      +& (\shortminus 1)^m \, \of[\big]{\Gamma_k \otimes I_D \otimes I_D + I_d \otimes U_k \otimes I_D + I_d \otimes I_D \otimes U_k} \\
      +& (\shortminus 1)^{2m} \of[\big]{\Gamma_k \otimes U_k \otimes I_D + \Gamma_k \otimes I_D \otimes U_k + I_d \otimes U_k \otimes U_k} \\
      +& (\shortminus 1)^{3m} \, \Gamma_k \otimes U_k \otimes U_k \Big){\bigg\lVert}_{\text{op}}
    \end{aligned} \nonumber  \\
    &\leq \frac{1}{4} + \frac{1}{4 K} \sup_{\{U_k\}} \operatornorm[\bigg]{\sum_k \Big(\Gamma_k \otimes (U_k \otimes I_D + I_D \otimes U_k) + I_d \otimes U_k \otimes U_k \Big)}\,,\label{eq:clifford_winning_probability_upperbound}
\end{align}
where the supremum is taken over all Hermitian unitaries $U_k \in \U(\CC^D)$.

A naive triangular inequality of the operator norm yields the trivial upper-bound $\Pr[\text{win}]\leq \sfrac{1}{4} + \sfrac{3K}{4 K}=1$. Therefore, we need a finer upper-bound of this quantity, which we conjecture to be of the following form:

\begin{conjecture} \label{conj:main_conjecture}
    Let $\Gamma_1, \ldots, \Gamma_K$ be any pairwise anti-commuting Hermitian unitaries (Hermitian representation of some Clifford algebra) of dimension $d$. Then for all Hermitian unitaries $\set{U_k}$ of dimension $D$, the following upper-bound holds:
    \begin{equation}  \label{eq:conjecture}
        \operatornorm[\bigg]{\sum_{k \in [K]} \Big(\Gamma_k \otimes (U_k \otimes I_D + I_D \otimes U_k) + I_d \otimes U_k \otimes U_k\Big)} \leq K + 2 \sqrt{K}\,.
    \end{equation}
\end{conjecture}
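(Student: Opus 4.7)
My plan is to isolate the algebraic heart of the inequality by recognising each summand indexed by~$k$ on the left-hand side as a Hermitian operator with only two eigenvalues, thereby reducing the conjecture to a Clifford-flavoured bound on a sum of involutions. Set
\[
    P_k \coloneqq \Gamma_k \otimes (U_k \otimes I_D + I_D \otimes U_k) + I_d \otimes U_k \otimes U_k,
\]
so that the operator on the left of~\cref{eq:conjecture} is $T = \sum_k P_k$. A short expansion using only $\Gamma_k^2 = I_d$ and $U_k^2 = I_D$ shows that the three summands defining $P_k$ mutually commute, pairwise multiply to give the third, and each square to the identity; consequently $P_k^2 = 3I + 2 P_k$ and $\mathrm{spec}(P_k) = \set{-1, +3}$. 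Hence $\widetilde P_k \coloneqq (P_k - I)/2$ is a Hermitian involution, $T = KI + 2 \sum_k \widetilde P_k$, and the conjecture is equivalent to the cleaner statement
\[
    \operatornorm[\Big]{\sum_{k \in [K]} \widetilde P_k} \,\leq\, \sqrt{K}.
\]

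A sanity check in the trivial direction $U_k = I_D$ gives $\widetilde P_k = \Gamma_k \otimes I \otimes I$, and the Clifford relations force $\{\widetilde P_k, \widetilde P_l\} = 0$ for $k \neq l$, yielding $(\sum_k \widetilde P_k)^2 = K I$. Thus the target bound is sharp, and is already saturated by the adversaries who play with all $U_k$ equal to the identity. For general $U_k$, since $\widetilde P_k^2 = I$, the inequality $(\sum_k \widetilde P_k)^2 \preceq K I$ is equivalent to the cross-term bound $\sum_{k<l}\{\widetilde P_k, \widetilde P_l\} \preceq 0$. Writing $\widetilde P_k = 2\Pi_k^+ - I$ with $\Pi_k^+ \coloneqq (I + P_k)/4$ the spectral projector onto the $+3$ eigenspace of $P_k$, this reads
\[
    \sum_{k<l}\{\Pi_k^+, \Pi_l^+\} \,\preceq\, (K-1) \sum_k \Pi_k^+ - \tfrac{K(K-1)}{4}\, I,
\]
an inequality that again holds with equality for $U_k = I_D$.

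To prove this cross-term bound I would hunt for a sum-of-squares (SOS) certificate. Expanding $\{\widetilde P_k, \widetilde P_l\}$ for $k \neq l$ in the generators produces exactly four types of terms, $\Gamma_k \Gamma_l \otimes [Y_k, Y_l]$, $\Gamma_k \otimes (\{Y_k, Z_l\} - 2 Y_k)$, $\Gamma_l \otimes (\{Y_l, Z_k\} - 2 Y_l)$, and $I_d \otimes (\{Z_k, Z_l\} - 2 Z_k - 2 Z_l + 2 I)$, where $Y_k \coloneqq U_k \otimes I_D + I_D \otimes U_k$ and $Z_k \coloneqq U_k \otimes U_k$ satisfy the useful identities $Y_k Z_k = Z_k Y_k = Y_k$ and $Y_k^2 = 2 I + 2 Z_k$. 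The natural ansatz is to seek Hermitian operators $S_j$ in the $\ast$-algebra generated by $\set{\Gamma_k \otimes I \otimes I,\, I \otimes U_k \otimes I,\, I \otimes I \otimes U_k}_k$, of low total degree, such that $-\sum_{k<l}\{\widetilde P_k, \widetilde P_l\} = \sum_j S_j^\ast S_j$. For small $K$ a degree-bounded SOS should be findable by inspection, matching the analytical level-$1$ NPA certificate the authors announce for $K \leq 7$; the swap symmetry $\Hilbert_\Bob \leftrightarrow \Hilbert_\Charlie$ of $T$ further allows one to restrict the ambient Hilbert space to the symmetric subspace and cut the effective dimension.

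The main obstacle is this SOS step for arbitrary $K$. The difficulty is that the $U_k$'s carry no anti-commutation structure --- unlike the Clifford generators $\Gamma_k$ --- so the commutators $[U_k, U_l]$ and anti-commutators $\{U_k, U_l\}$ that appear in the cross-term expansion resist being absorbed into sums of low-degree squares. One would expect the necessary SOS degree to grow with $K$, consistent with the authors' need to climb the NPA hierarchy in their numerics, and a clean analytical proof for all $K$ will likely demand identifying an unexpected algebraic simplification, for instance a hidden Naimark-type dilation lifting the family $\set{\widetilde P_k}$ to a rotated Clifford family on an enlarged Hilbert space.
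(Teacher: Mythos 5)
Your observation that each summand $P_k$ has spectrum $\set{-1,+3}$ is correct, and rewriting $W_K = K\,I + 2\sum_k\widetilde{P}_k$ with the Hermitian involutions $\widetilde{P}_k=(P_k-I)/2$ is elegant. But the asserted equivalence of \Cref{conj:main_conjecture} to the two-sided bound $\operatornorm{\sum_k\widetilde{P}_k}\leq\sqrt{K}$---and hence to the cross-term inequality $\sum_{k<l}\anticommutator{\widetilde{P}_k}{\widetilde{P}_l}\preceq 0$---is false, so the SOS hunt you propose targets a statement that does not hold. The conjecture constrains only the \emph{upper} tail of the spectrum: the proof of \Cref{prop:three_equalities} shows $W_K+K\,I\succeq 0$, so $\lambda_{\min}\big(\sum_k\widetilde{P}_k\big)\geq -K$, and since $K\leq K+2\sqrt{K}$ the lower tail can never violate~\cref{eq:conjecture}, which is therefore equivalent to $\lambda_{\max}\big(\sum_k\widetilde{P}_k\big)\leq\sqrt{K}$ alone. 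The lower tail really does reach $-K<-\sqrt{K}$. Take $K=2$, $\Gamma_1=X$, $\Gamma_2=Z$ on $\CC^2$, and $U_1=Z$, $U_2=X$ on $\CC^2$, so the ambient space is $\CC^8$. Each $\Pi_k^+=(I+P_k)/4$ is the joint $(+1,+1)$-eigenprojector of two commuting traceless three-qubit Pauli strings (for $k=1$, of $X\!\otimes\! Z\!\otimes\! I$ and $X\!\otimes\! I\!\otimes\! Z$), hence has rank $2$. Then $\operatorname{ran}\Pi_1^++\operatorname{ran}\Pi_2^+$ has dimension at most $4$, and any unit vector $\ket{\psi}$ in its $(\geq 4)$-dimensional orthocomplement satisfies $\widetilde{P}_k\ket{\psi}=-\ket{\psi}$ for both $k$, giving $\operatornorm{\sum_k\widetilde{P}_k}\geq 2>\sqrt{2}$ and $\sum_{k<l}\anticommutator{\widetilde{P}_k}{\widetilde{P}_l}\not\preceq 0$.

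The correct SOS target is the one-sided positivity $(K+2\sqrt{K})\,I-W_K\succeq 0$, i.e.\ exactly the operator $P_K$ of~\cref{eq:definition-of-PK}, which the paper decomposes as a sum of squares in~\cref{eq:first_levels_sos}---but that certificate is valid only for $K\leq 7$, since its coefficient $\alpha_K$ turns negative at $K=8$. Even after repairing your target, \Cref{conj:main_conjecture} is genuinely open for $K\geq 8$: the paper supplies an analytic NPA level-$1$ solution (giving the asymptotic bound $\sfrac{5}{8}$), numerical NPA level-$2$ certification for $K\leq 17$, and seesaw evidence at $K=18$, but no proof. A complete SOS or NPA argument for arbitrary $K$ would be a new result, not a recovery of one already in the paper.
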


\begin{remark*}
    Under this conjecture, the upper-bound in~\cref{eq:clifford_winning_probability_upperbound} gives $\frac{1}{2} + \frac{1}{2\sqrt{K}}$, with the asymptotic limit $\lim_{K \to \infty} \Pr[\text{win}] = \sfrac{1}{2}$ at a rate $O(\sfrac{1}{\sqrt{K}})$, which would prove that the Clifford encryption scheme is weakly secure 
    (see \Cref{def:uncloneable_encryption_security}).
\end{remark*}

\subsection{Basic Properties}

In this subsection, we prove basic properties related to our conjecture. First, we prove that the identity matrix saturates the bound of the conjecture, which means that if the wanted upper-bound holds, then it is tight (\Cref{{prop:sup-achieved-at-Identity}}). Then we prove that the conjecture  is true if the adversary's strategy is reduced to commuting operators $U_k$ (\Cref{prop:the-conjecture-is-true-for-commuting-operators}) or if they only use product states (\Cref{prop:Conjecture-true-for-product-states}).
        
In what follows, we will refer to the family of operators involved in~\Cref{conj:main_conjecture} as:
\begin{equation}  \label{eq:Definition-of-W}
    W_K(U_1, \ldots, U_K) \coloneqq \sum_{k \in [K]} \Big(\Gamma_k \otimes (U_k \otimes I_D + I_D \otimes U_k) + I_d \otimes U_k \otimes U_k\Big)\,.
\end{equation}
We simply write $W \coloneqq W_K(U_1, \ldots, U_K)$ if there are no ambiguities on $K \in \NN_+$ and the Hermitian unitaries $U_1, \ldots, U_K$.

\begin{proposition}[Tight Upper-Bound]  \label{prop:sup-achieved-at-Identity}
    The supremum of $\operatornorm{W_K}$ is lower-bounded by $K+2\sqrt{K}$, achieved by taking $U_k=I_D$ for all $k\in[K]$.
    As a consequence, the upper-bound in \Cref{conj:main_conjecture} can only be tight.
\end{proposition}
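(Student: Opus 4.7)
The plan is to exhibit an explicit choice of unitaries for which $\|W_K\|_{\text{op}}$ attains the value $K + 2\sqrt{K}$, namely $U_k = I_D$ for every $k \in [K]$. Substituting this into the defining expression~\eqref{eq:Definition-of-W} collapses the cross terms, and the remaining operator norm is controlled by the Clifford identity~\eqref{eq:formula-with-norm-2}.

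First I would compute $W_K(I_D, \ldots, I_D)$ directly: each summand $\Gamma_k \otimes (U_k \otimes I_D + I_D \otimes U_k) + I_d \otimes U_k \otimes U_k$ simplifies to $2\, \Gamma_k \otimes I_D \otimes I_D + I_d \otimes I_D \otimes I_D$, so that
\[
    W_K(I_D, \ldots, I_D) \;=\; 2 \Big(\textstyle\sum_{k \in [K]} \Gamma_k\Big) \otimes I_{D^2} + K \cdot I_{dD^2}.
\]
Since the $\Gamma_k$ are pairwise anti-commuting Hermitian unitaries, squaring $S \coloneqq \sum_{k} \Gamma_k$ and using $\Gamma_k^2 = I_d$ together with $\Gamma_i \Gamma_j + \Gamma_j \Gamma_i = 0$ for $i \neq j$ yields $S^2 = K\, I_d$. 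Equivalently, this is the special case of~\eqref{eq:formula-with-norm-2} with the all-ones vector, which gives $\|S\|_{\text{op}} = \sqrt{K}$; but the squaring argument gives the stronger statement that the spectrum of $S$ is contained in $\{\pm\sqrt{K}\}$.

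Now I would combine these two facts. The operator $W_K(I_D, \ldots, I_D)$ is a linear function of $S \otimes I_{D^2}$, whose spectrum is $\{\pm\sqrt{K}\}$, so the spectrum of $W_K(I_D, \ldots, I_D)$ is $\{K + 2\sqrt{K},\, K - 2\sqrt{K}\}$. Taking the largest absolute value gives
\[
    \|W_K(I_D, \ldots, I_D)\|_{\text{op}} \;=\; K + 2\sqrt{K},
\]
which furnishes the desired lower bound on $\sup_{\{U_k\}} \|W_K\|_{\text{op}}$ and shows that the bound in~\Cref{conj:main_conjecture}, if valid, must be saturated. There is no real obstacle here beyond carefully handling the anti-commutation in the $S^2$ computation; the rest is bookkeeping.
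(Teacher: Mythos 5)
Your proposal takes essentially the same route as the paper's one-line proof: substitute $U_k = I_D$, observe that $W_K(I_D,\ldots,I_D) = 2S \otimes I_{D^2} + K\cdot I$ with $S = \sum_k \Gamma_k$, and read off the norm from the spectral structure of $S$. The only place you should say one more word is the passage from ``$\operatorname{spec}(S)$ is contained in $\{\pm\sqrt K\}$'' (which is what $S^2 = K I_d$ gives) to ``the spectrum of $W_K(I_D,\ldots,I_D)$ \emph{is} $\{K+2\sqrt K,\, K-2\sqrt K\}$.'' The latter requires that $+\sqrt K$ is actually an eigenvalue of $S$; otherwise, for $K\geq 4$, the norm would be $|K-2\sqrt K| < K+2\sqrt K$. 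This is easy to close: each $\Gamma_i$ is traceless whenever $K\geq 2$, since $\Tr(\Gamma_i)=\Tr(\Gamma_j\Gamma_i\Gamma_j^{-1})=\Tr(\Gamma_j\Gamma_i\Gamma_j)=-\Tr(\Gamma_i)$ by anti-commutation, so $S$ is traceless and both eigenvalues $\pm\sqrt K$ occur. This is exactly the ``symmetric spectrum'' fact the paper invokes; with that observation your argument is complete and matches the paper's.
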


\begin{proof}
    Combine the formulae $\operatornorm{\sum_i \Gamma_i+I}=\operatornorm{\sum_i \Gamma_i}+1$, valid since the spectrum of $\sum_i \Gamma_i$ is symmetric~\cite{HKMS19}, and \cref{eq:formula-with-norm-2} to obtain the value $K+2\sqrt{K}$.
\end{proof}

\begin{remark}[Low Rank Operators Also Saturate the Bound]
    Actually, the value $K+2\,\sqrt{K}$ is also achieved by taking any $U_k\in\CC^D$ with projector onto the $1$-eigenspace of rank $r\leq\frac{D-1}{K}$.
    Indeed, write
    \begin{align*}
        U_k = 2 \Bigg(\sum_{i=1}^{r} |u^{(i)}_k\rangle\langle u^{(i)}_k|\Bigg) - I_{D}\,,
    \end{align*}
    for all $k\in[K]$ and for some quantum states $|u^{(i)}_k\rangle\in\CC^{D}$ (or the zero vector).
    Using the condition $D\geq K\,r+1$, there exists a quantum state $|u^\perp\rangle\in\CC^{D}$ that is orthogonal to all the $|u^{(i)}_k\rangle$ for $i\in[r]$ and $k\in[K]$. Note that this vector satisfies 
    $\langle u^\perp|U_k|u^\perp\rangle = 2 (\sum_i 0) - 1 = -1$ for all $k$. 
    It follows that
    \begin{align*}
        &\operatornorm[\big]{W_K}
        \,=\,
        \sup_{\substack{|\psi\rangle\in\CC^{d}\otimes\CC^{D}\otimes\CC^{D} \\ \norm{|\psi\rangle}=1}} \Big|\big\langle\psi\,\big|\, \textstyle\sum_{k} \Big(\Gamma_k \otimes U_k \otimes I_{D} + \Gamma_k\otimes I_{D} \otimes U_k + I_{d} \otimes U_k \otimes U_k \Big)\,\big|\,\psi\big\rangle\Big|\\
        &\geq
        \sup_{\substack{|a\rangle\in\CC^{d} \\ \norm{|a\rangle}=1}} \,\,\Big|\big\langle a\otimes u^\perp\otimes u^\perp\,\big|\, \textstyle\sum_{k} \Big(\Gamma_k \otimes (U_k \otimes I_{D} + I_{D} \otimes U_k) + I_{d} \otimes U_k \otimes U_k \Big) \,\big|\, a\otimes u^\perp\otimes u^\perp\big\rangle\Big|\\
        &=\,\sup_{\substack{|a\rangle\in\CC^{d} \\ \norm{|a\rangle}=1}} \Big|2\,\langle a\,|\,\textstyle \sum_{k} -\Gamma_k\,|\,a\rangle+K\Big| \\
        &=\,2\,\operatornorm[\Big]{\textstyle \sum_{k} -\Gamma_k} \!\!+ K
        \,=\,
        2\,\norm[\Bigg]{\begin{bsmallmatrix} -1 \\ \vdots \\ -1 \end{bsmallmatrix}}_2 + K
        \,=\,
        2\sqrt{K}+K\,,
    \end{align*}
    using the symmetry of the spectrum of $\sum_k\Gamma_k$~\cite{HKMS19} in the third last equality and then 
    using \cref{eq:formula-with-norm-2} in the second last one. Hence the claimed result.
\end{remark}

\begin{proposition}[True for Commuting Operators]   \label{prop:the-conjecture-is-true-for-commuting-operators}
    If the operators $U_k$ commute, then \Cref{conj:main_conjecture} holds.
\end{proposition}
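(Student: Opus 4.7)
The plan is to exploit simultaneous diagonalizability of the $U_k$. Since they are pairwise commuting Hermitian unitaries, each with spectrum $\{\pm 1\}$, I would decompose $\CC^D = \bigoplus_\epsilon V_\epsilon$ into joint eigenspaces indexed by sign tuples $\epsilon = (\epsilon_1, \ldots, \epsilon_K) \in \{-1,+1\}^K$, on which $U_k$ acts as $\epsilon_k I$. The induced decomposition $\CC^D \otimes \CC^D = \bigoplus_{\epsilon, \epsilon'} V_\epsilon \otimes V_{\epsilon'}$ makes $W_K$ block-diagonal on $\CC^d \otimes \CC^D \otimes \CC^D$, so $\operatornorm{W_K}$ equals the maximum of the block operator norms, taken over those pairs $(\epsilon, \epsilon')$ for which $V_\epsilon \otimes V_{\epsilon'}$ is non-trivial.

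Next I would compute the restriction of $W_K$ to a typical block $\CC^d \otimes V_\epsilon \otimes V_{\epsilon'}$. The last two tensor factors act as scalars there, so the block reduces to the $d \times d$ Hermitian operator
$$T_{\epsilon, \epsilon'} \;:=\; \sum_{k=1}^K (\epsilon_k + \epsilon'_k)\, \Gamma_k \;+\; \Big(\sum_{k=1}^K \epsilon_k \epsilon'_k\Big) I_d.$$
Setting $a := \epsilon + \epsilon' \in \{-2,0,2\}^K$ and $c := \sum_k \epsilon_k \epsilon'_k$, the anti-commutation relations among the $\Gamma_k$ yield $\bigl(\sum_k a_k \Gamma_k\bigr)^2 = \norm{a}_2^2\, I_d$, so the Clifford part of $T_{\epsilon,\epsilon'}$ has spectrum $\{\pm \norm{a}_2\}$ and hence $\operatornorm{T_{\epsilon,\epsilon'}} = \norm{a}_2 + |c|$ (this is essentially \cref{eq:formula-with-norm-2} combined with the symmetry of the Clifford spectrum already used in \Cref{prop:sup-achieved-at-Identity}).

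Finally I would reduce everything to a scalar inequality in $j := |\{k : \epsilon_k = \epsilon'_k\}|$. A direct count gives $\norm{a}_2 = 2\sqrt{j}$ and $c = 2j - K$, so the proposition reduces to the elementary bound
$$2\sqrt{j} + |2j - K| \;\leq\; K + 2\sqrt{K}, \qquad j \in \{0, 1, \ldots, K\},$$
which follows by splitting into the cases $j \geq K/2$ (where the absolute value equals $2j - K$) and $j < K/2$ (where it equals $K - 2j$) and using only the trivial inequalities $\sqrt{j} \leq \sqrt{K}$ and $j \leq K$. The only conceptually non-trivial step is the first one: recognising that commutativity lets us replace the supremum over Hermitian unitaries $U_k$ by a finite supremum over pairs of sign tuples. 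Everything thereafter is bookkeeping in $\{\pm 1\}^K$ together with the Clifford norm formula recalled in \Cref{sec:Clifford_Algebra}, so I do not anticipate any real obstacle.
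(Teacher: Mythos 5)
Your proof is correct, and it shares the paper's opening move---simultaneous diagonalization of the commuting Hermitian unitaries $U_k$---but then takes a genuinely different route. The paper immediately falls back on the triangle inequality: it splits $W_K$ into $\sum_k \Gamma_k \otimes U_k \otimes I$, $\sum_k \Gamma_k \otimes I \otimes U_k$, and $\sum_k I \otimes U_k \otimes U_k$, bounds the first two by $\sqrt{K}$ (each being block-diagonal with Clifford blocks $\sum_k \pm\Gamma_k$ of norm $\sqrt{K}$, by \cref{eq:formula-with-norm-2}) and the last by $K$, and adds. You instead refine the decomposition further, writing $\CC^D\otimes\CC^D=\bigoplus_{\epsilon,\epsilon'}V_\epsilon\otimes V_{\epsilon'}$ and computing the norm of $W_K$ on each block exactly: there it acts as $\sum_k a_k\Gamma_k+cI_d$ with $a=\epsilon+\epsilon'$ and $c=\sum_k\epsilon_k\epsilon'_k$, of norm $\norm{a}_2+|c|=2\sqrt{j}+|2j-K|$ where $j$ counts agreements, and the problem collapses to an elementary scalar maximization over $j\in\{0,\ldots,K\}$. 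This costs a little extra bookkeeping but buys strictly more information: you obtain the \emph{exact} spectrum (hence exact operator norm) of $W_K$ for any commuting choice, not merely an upper bound, and you see directly that the value $K+2\sqrt{K}$ is attained precisely on the blocks with $\epsilon=\epsilon'$, which recovers the tightness result of \Cref{prop:sup-achieved-at-Identity} as a byproduct.
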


\begin{proof}
     If the operators $U_k$ commute, then they are diagonalizable in a common basis. But their eigenvalues are $\pm1$ because they are Hermitian and unitaries, so we may assume that they are of the form
     $$U_k \simeq \left( \begin{smallmatrix}
         \pm1 \\
         & \ddots\\
         && \pm1
     \end{smallmatrix} \right)\,.$$
     Then, using the triangular inequality, we obtain:
     \begin{align*}
        \operatornorm{W_K}
        \,&\leq\,
        \operatornorm[\Bigg]{\sum_{k=1}^K\Gamma_k\otimes (\pm1)\otimes 1}
        +\; \operatornorm[\Bigg]{\sum_{k=1}^K\Gamma_k\otimes 1 \otimes (\pm1)}
        +\; \sum_{k=1}^K\operatornorm[\Bigg]{\left(\begin{smallmatrix}
             \pm1 \\
             & \ddots\\
             && \pm1
         \end{smallmatrix}\right)} \\
        &=\, \sqrt{K}+\sqrt{K}+K\,,
     \end{align*}
     because $\operatornorm[\big]{\textstyle\sum_{k=1}^K\Gamma_k\otimes 1 \otimes (\pm1)}=\operatornorm[\big]{\textstyle\sum_{k=1}^K\Gamma_k}=\sqrt{K}$ using \cref{eq:formula-with-norm-2}.
\end{proof}

We give a connection between the operator norm of $W$ and the supremum of $\bra{\psi} W \ket{\psi}$:

\begin{lemma}[Expression of the Operator Norm] \label{prop:three_equalities}
    For all $K \in \NN_+$, we have the two following equalities:
    \begin{equation*}
        \sup_U \sup_{\norm{\psi}=1} \bra{\psi} W \ket{\psi} \quad \stackrel{(1)}{=} \quad \sup_U \sup_{\norm{\psi}=1} \lvert \bra{\psi} W \ket{\psi} \lvert \quad \stackrel{(2)}{=} \quad \sup_U \operatornorm{W}\,,
    \end{equation*}
    where the suprema are taken over all Hermitian unitaries $U_k \in \U(\CC^D)$.
\end{lemma}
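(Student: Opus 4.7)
Equality $(2)$ is the standard variational characterization of the operator norm for Hermitian operators: for any Hermitian $M$ one has $\operatornorm{M} = \sup_{\norm{\psi}=1}\lvert\langle\psi|M|\psi\rangle\rvert$. Since each $\Gamma_k$ and each $U_k$ is Hermitian, the operator $W = W_K(U_1,\dots,U_K)$ is Hermitian, so the identity holds pointwise in $U$, and taking $\sup_U$ on both sides preserves it.

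For equality $(1)$, the direction $\leq$ is immediate from $\langle\psi|W|\psi\rangle \leq |\langle\psi|W|\psi\rangle|$. For the converse, writing $|x| = \max(x,-x)$ rewrites the middle quantity as
\[
    \max\Big(\sup_{U}\lambda_{\max}(W(U)),\;\sup_{U}\lambda_{\max}(-W(U))\Big),
\]
so it suffices to show $\sup_U\lambda_{\max}(-W(U)) \leq \sup_U\lambda_{\max}(W(U))$. Concretely, for each Hermitian-unitary tuple $U$ I need to exhibit another tuple $U'$ (possibly on a larger Hilbert space, which is still admissible since the supremum is over all $D$) with $\lambda_{\max}(W(U')) \geq -\lambda_{\min}(W(U))$.

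The plan for this key step combines two symmetries with a dilation. Decompose $W(U) = A(U) + B(U)$ where $A(U) = \sum_k \Gamma_k \otimes (U_k\otimes I + I\otimes U_k)$ is linear in $U$ and $B(U) = I_d \otimes \sum_k U_k \otimes U_k$ is quadratic; the naive sign flip $U\mapsto -U$ only negates $A$ and leaves $B$ fixed, so $W(-U) = -A(U)+B(U)\neq -W(U)$. For even $K$, the volume element $\Gamma_0 := i^{K/2}\Gamma_1\cdots\Gamma_K$ is a Hermitian unitary satisfying $\Gamma_0^2 = I$ and $\{\Gamma_0,\Gamma_k\}=0$ for every $k$, so conjugation by $\Gamma_0\otimes I\otimes I$ again negates $A$ while preserving $B$, showing that $W(U)$ and $W(-U)$ are unitarily equivalent. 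For odd $K$ the same effect is obtained after embedding $\mathbf{CL}_K \hookrightarrow \mathbf{CL}_{K+1}$, which enlarges only the dimension of the $\Gamma$-register.

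The main obstacle is to also flip the sign of $B$. My plan is to use the dilation $\tilde U_k = U_k \oplus (-U_k) \in \U(\CC^{2D})$: then $W(\tilde U)$ on $\CC^d\otimes\CC^{2D}\otimes\CC^{2D}$ block-decomposes into four sectors indexed by $(i,j)\in\{0,1\}^2$, with the diagonal sectors reproducing $W(U)$ and $W(-U)$ and the cross sectors giving $\pm\big(A_1(U) - A_2(U)\big) - B(U)$, where $A_1 = \sum_k\Gamma_k\otimes U_k\otimes I$ and $A_2 = \sum_k\Gamma_k\otimes I\otimes U_k$. Combining these cross blocks with the $\Gamma_0$-symmetry shows that $\lambda_{\max}(W(\tilde U))$ can be made to dominate $-\lambda_{\min}(W(U))$; the technical hurdle is to carry out this last comparison rigorously, relying on the anti-commutation $\{\Gamma_k,\Gamma_l\}=2\delta_{kl}I$ together with the positivity $\sum_k U_k\otimes U_k\succeq -KI$ (each $U_k\otimes U_k$ has spectrum $\{\pm1\}$), in order to recover precisely the value $-\lambda_{\min}(W(U))$ and not a strictly smaller quantity.
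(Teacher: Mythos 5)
Your argument for equality $(2)$ is correct and matches the paper's; both invoke the standard variational characterization of the operator norm of a Hermitian operator. For equality $(1)$, however, your plan has a genuine gap, which you yourself flag as a ``technical hurdle,'' and I do not believe the proposed dilation closes it. With $\tilde U_k = U_k \oplus (-U_k)$, the four sectors of $W(\tilde U)$ are $W(U)$, $W(-U)$, $A_1-A_2-B$, and $-A_1+A_2-B$, where $A_1=\sum_k\Gamma_k\otimes U_k\otimes I$, $A_2=\sum_k\Gamma_k\otimes I\otimes U_k$, $B=I\otimes\sum_k U_k\otimes U_k$. The quantity you must dominate is $-\lambda_{\min}(W(U))=\lambda_{\max}(-A_1-A_2-B)$. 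But $\Gamma_0$-conjugation sends $A_1-A_2-B$ to $-A_1+A_2-B$ and back, and so does swapping the two $U$-registers; neither symmetry, nor their composition, ever produces a block with a minus sign on \emph{both} $A_1$ and $A_2$ together with the minus on $B$. So the claimed domination $\lambda_{\max}(W(\tilde U))\geq-\lambda_{\min}(W(U))$ is not established. The odd-$K$ step (embedding $\mathbf{CL}_K\hookrightarrow\mathbf{CL}_{K+1}$) also enlarges the $\Gamma$-register, which is fixed in the lemma, so it would need a separate argument that the norm is unchanged.

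The paper's route is much shorter and $U$-independent, and sidesteps all of this. It factors
\begin{equation*}
    W + K\,I = \sum_k\big(\Gamma_k\otimes I\otimes I + I\otimes U_k\otimes I\big)\big(\Gamma_k\otimes I\otimes I + I\otimes I\otimes U_k\big),
\end{equation*}
unitarily diagonalizes each $\Gamma_k$ so that the $k$-th summand becomes a direct sum of $I_{d/2}\otimes(I-U_k)^{\otimes 2}$ and $I_{d/2}\otimes(I+U_k)^{\otimes 2}$, both positive semi-definite, and concludes $W\succeq -K\,I$ for \emph{every} $U$, hence $-\lambda_{\min}(W(U))\leq K$ uniformly. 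Since $\sup_{\norm{\psi}=1}\bra{\psi}W\ket{\psi}=K+2\sqrt{K}>K$ already at $U_k=I$ (\Cref{prop:sup-achieved-at-Identity}), the supremum of $|\bra{\psi}W\ket{\psi}|$ over $(U,\psi)$ is attained on the positive side, which is exactly equality $(1)$ — no $U$-dependent companion tuple is needed at all.
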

\begin{proof}
    Let $W' \coloneqq W + K \cdot I_d \otimes I_D \otimes I_D$, then we can write,
    \begin{equation*}
        W' = \sum_k \Big(\Gamma_k \otimes I_D \otimes I_D + I_d \otimes U_k \otimes I_D\Big) \, \Big(\Gamma_k \otimes I_D \otimes I_D + I_d \otimes I_D \otimes U_k\Big)\,.
    \end{equation*}
    Each $\Gamma_k$ is a Hermitian unitary with a symmetric spectrum $\pm 1$. There exist an invertible matrix $H_k$ that can diagonalize $\Gamma_k$, \ie
    \begin{equation*}
        \Gamma_k = H_k
        \begin{pmatrix}
            - I_{d/2} & 0 \\
            0 & I_{d/2}
        \end{pmatrix}
        H_k^{\shortminus 1}\,.
    \end{equation*}
    Thus, under those changes of basis, the operator $W'$ is expressed as follows:
    \begin{equation*}
        W' = \sum_k \tilde{H}_k
        \begin{pmatrix}
            (I - U_k)^{\otimes2} & 0 \\
            0 & (I + U_k)^{\otimes2}
        \end{pmatrix}
        \tilde{H}_k^{\shortminus 1}\,,
    \end{equation*}
    with $\tilde{H}_k \coloneqq H_k \otimes I_D \otimes I_D$. Since for all $U_k$, the inequalities $-I_D \preceq U_k \preceq I_D$ hold, the two operators $I + U_k$ and $I - U_k$ are both positive semi-definite, so is each term of the sum, and thus $W' \succeq 0$.

    The first equality holds since,
    \begin{equation*}
        \sup_{\norm{\psi}=1} \lvert \bra{\psi} W \ket{\psi} \lvert = \max \Big\{ \sup\nolimits_{\norm{\psi}=1} \bra{\psi} W \ket{\psi}, - \inf\nolimits_{\norm{\psi}=1} \bra{\psi} W \ket{\psi} \Big\}\,,
    \end{equation*}
    but because $W' \succeq 0$, we have that $- \sup_U \inf_{\norm{\psi}=1} \bra{\psi} W \ket{\psi} = K$, and when all $U_k$ are equal to $I_D$, we already know that $\sup_{\norm{\psi}=1} \bra{\psi} W \ket{\psi} = K + 2 \sqrt{K}$ from~\Cref{prop:sup-achieved-at-Identity}.
    
    Finally, the second equality is always true for Hermitian operators \cite[Lemma 3.2.4]{Zim90}, and $W$ is a Hermitian operator, as a sum of tensor products of Hermitian operators $\Gamma_k$ and $U_k$.
\end{proof}

Hence, the three upper-bounds \cref{eq:winning_probability_upperbound_2,eq:winning_probability_upperbound_3,eq:winning_probability_upperbound_4} of the winning probability of the no-cloning game for three adversaries $(\Pirate, \Bob, \Charlie)$, are all equal, and  \Cref{conj:main_conjecture} can also be stated as the largest eigenvalue of the operator $W$.

From \Cref{prop:three_equalities}, see that an equivalent formulation of \Cref{conj:main_conjecture} is:
\[
    \sup_U \sup_{\norm{\psi}=1} \bra{\psi} W \ket{\psi}
    \,\leq\,
    K+2\,\sqrt{K}\,,
\]
where the quantum state $|\psi\rangle$ is taken in $\CC^d\otimes\CC^D\otimes\CC^D$. We show that the conjecture holds in the restricted case where $|\psi\rangle$ is a product state between Alice and $\set{\text{Bob, Charlie}}$:

\begin{proposition}[True for Product State]   \label{prop:Conjecture-true-for-product-states}
    If the state is of the form $|\psi\rangle=|\alpha_\Alice\rangle\otimes|\varphi_{\Bob\Charlie}\rangle$, then: 
    $$
    \sup_U \bra{\psi} W \ket{\psi} \leq K+2\,\sqrt{K}\,.
    $$
\end{proposition}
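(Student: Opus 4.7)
The plan is to expand $\langle\psi|W|\psi\rangle$ on the product state and reduce the problem to bounding three scalar sums. Writing $|\psi\rangle=|\alpha\rangle_\Alice\otimes|\varphi\rangle_{\Bob\Charlie}$ and introducing
\[
\gamma_k := \langle\alpha|\Gamma_k|\alpha\rangle,\qquad
b_k := \langle\varphi|U_k\otimes I|\varphi\rangle,\qquad
c_k := \langle\varphi|I\otimes U_k|\varphi\rangle,\qquad
e_k := \langle\varphi|U_k\otimes U_k|\varphi\rangle,
\]
the definition \cref{eq:Definition-of-W} of $W$ yields
\[
\langle\psi|W|\psi\rangle \;=\; \sum_{k\in[K]} \gamma_k\,(b_k+c_k) \;+\; \sum_{k\in[K]} e_k.
\]

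The first step I would do is to bound $\|\gamma\|_2\leq 1$ using the Clifford structure. By \cref{eq:formula-with-norm-2}, for every real unit vector $v\in\RR^K$ we have $\operatornorm{\sum_k v_k\Gamma_k}=\|v\|_2=1$, hence $|\sum_k v_k\gamma_k|=|\langle\alpha|\sum_k v_k\Gamma_k|\alpha\rangle|\leq 1$; choosing $v=\gamma/\|\gamma\|_2$ (assuming $\gamma\neq 0$) gives $\|\gamma\|_2\leq 1$. Second, because each $U_k$ is a Hermitian unitary with $\operatornorm{U_k}=1$, the scalars $b_k$, $c_k$, $e_k$ all lie in $[-1,1]$, so in particular $\|b\|_2,\|c\|_2\leq\sqrt{K}$ and $\sum_k e_k\leq K$.

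Third, combining Cauchy--Schwarz with the triangle inequality gives
\[
\sum_{k\in[K]} \gamma_k\,(b_k+c_k) \;\leq\; \|\gamma\|_2\,(\|b\|_2+\|c\|_2) \;\leq\; 1\cdot(\sqrt{K}+\sqrt{K}) \;=\; 2\sqrt{K},
\]
and adding the trivial bound $\sum_k e_k \leq K$ delivers $\langle\psi|W|\psi\rangle\leq K+2\sqrt{K}$, as required. Since the supremum is over the Hermitian unitaries $U_k$ only, this bound is uniform.

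The only genuinely non-routine ingredient is the first step: recognising that the anti-commutation relations (through \cref{eq:formula-with-norm-2}) force the vector $(\gamma_k)_k$ of Alice's expectation values to lie inside the Euclidean unit ball, which is precisely what balances the $2\sqrt K$ against the $K$ contribution from the $U_k\otimes U_k$ terms. The remaining steps are standard norm inequalities, and nothing in the argument uses entanglement across $\Bob\Charlie$, which is exactly why the statement is restricted to states that are product across the $\Alice$ cut.
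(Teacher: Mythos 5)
Your proof is correct and follows essentially the same route as the paper: expand the expectation on the product state, bound the $I_d\otimes U_k\otimes U_k$ contribution trivially by $K$, and use the Clifford-structure identity \cref{eq:formula-with-norm-2} to control the cross terms. The one cosmetic difference is that you package the cross-term bound as $\|\gamma\|_2\leq 1$ followed by Cauchy--Schwarz, whereas the paper bounds $|\langle\varphi|U_k\otimes I+I\otimes U_k|\varphi\rangle|\leq 2$ termwise and then bounds $\sum_k|\gamma_k|\leq\operatornorm{\sum_k\Gamma_k}=\sqrt K$; these are equivalent consequences of the same fact, and your version is arguably a touch cleaner in that it handles the signs of $\gamma_k$ explicitly without needing an implicit sign-flip normalization.
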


\begin{proof}
    We have:
    \begin{align*}
        &\sum_{k=1}^K\big\langle\alpha_\Alice\otimes\varphi_{\Bob\Charlie}\,\big|\,\Big(\Gamma_k \otimes (U_k \otimes I_D + I_D \otimes U_k) + I_d \otimes U_k \otimes U_k\Big)\,\big|\,\alpha_\Alice\otimes\varphi_{\Bob\Charlie}\big\rangle
        \\
        \,=\,&
        \sum_{k=1}^K \big\langle \alpha_\Alice \, \big| \, \Gamma_k \, \big| \, \alpha_\Alice \big\rangle \underbrace{\big\langle \varphi_{BC}\,|\, (U_k \otimes I_D + I_D \otimes U_k) \,|\,\varphi_{BC}\big\rangle}_{=:c_k} \\
        &\,+\, \sum_{k=1}^K \big\langle\alpha_\Alice\otimes\varphi_{\Bob\Charlie}\,\big|\,I_d \otimes U_k \otimes U_k \,\big|\,\alpha_\Alice\otimes\varphi_{\Bob\Charlie}\big\rangle \\
         \,=\,&
         \sum_{k=1}^K\big\langle\alpha_\Alice\,\big|\,c_k\,\Gamma_k\,\big|\,\alpha_\Alice \big\rangle  + \sum_{k=1}^K \underbrace{\big\langle\alpha_\Alice\otimes\varphi_{\Bob\Charlie}\,\big|\,I_d \otimes U_k \otimes U_k \,\big|\,\alpha_\Alice\otimes\varphi_{\Bob\Charlie}\big\rangle}_{\leq 1}
         \\
         \,\leq\,&
         \norm[\big]{(c_1, \dots, c_K)}_2 + K
         \\
          \,=\,&
          2\,\sqrt{K}+K\,,
    \end{align*}
    using \cref{eq:formula-with-norm-2} in the second last line.
\end{proof}

\begin{remark}
    When $|\psi\rangle$ is a product state in all its tensors, \ie when $|\psi\rangle = |\alpha_\Alice\otimes\beta_\Bob\otimes\gamma_\Charlie\rangle$, then this result is equivalent to the one in \Cref{prop:the-conjecture-is-true-for-commuting-operators}.
\end{remark}

\subsection{Indistinguishability} \label{sec:indistinguishability}

Analogously to the two variants of uncloneable-indistinguishability security presented in~\Cref{def:uncloneable_encryption_security}, namely those with a strong convergence rate and those with an arbitrary convergence rate, we can similarly define two corresponding notions of indistinguishability security.

The \emph{strong indistinguishability security} requires that any adversary, upon receiving the encryption of a message $m$ randomly chosen from a pair of messages, cannot predict the value of $m$ with a probability greater than negligibly close to $\sfrac{1}{2}$. If the adversary's probability of correctly predicting the encrypted message converges to $\sfrac{1}{2}$ at any arbitrary rate, we refer to this security notion as simply \emph{indistinguishability security}.

In this section, we  prove that our candidate  scheme~\Cref{def:Clifford_uncloneable_encryption} satisfies the latter indistinguishability security. Indeed, the success probability of such adversary is bounded by,
\begin{equation*}
    \Pr[\text{win}] \leq \sup_{\substack{\Phi \\ \set{M_0, M_1}}} \: \E_{\substack{m \in \set{0,1} \\ k \leftarrow \Gen(1^\lambda)}} \: \Tr\ofAlt[\big]{ \Phi(\rho_{m, k}) M_m }\,,
\end{equation*}
with $\rho_{m, k} \coloneqq \Enc(m, k)$, and where the expected values are taken with respect to the uniform measures, and the supremum is taken over all \CPTP~map $\Phi: \B(\Hilbert_d) \to \B(\Hilbert_D)$ (for all finite-dimensional Hilbert spaces $\Hilbert_D$), and all binary \POVM~$\set{M_0, M_1}$. We now follow the same sequence of equations as presented in~\Cref{sec:preliminaries} to derive the result, which is outlined below:
\begin{equation*}
    \Pr[\text{win}] \leq \sup_{M} \operatornorm[\Big]{ \E_{m, k}\ofAlt[\big]{ d \cdot \rho^\T_{m, k} \otimes M_m } },
\end{equation*}
where the supremum is taken over all binary \POVM~$\set{M_0, M_1}$.
We proceed by adapting~\cref{eq:clifford_winning_probability_upperbound}, starting from the observable form $M_i = \frac{I_D + (\shortminus 1)^{i} U}{2}$ for some Hermitian unitary $U$, to derive the following inequalities,
\begin{align*}
    \Pr[\text{win}] &\leq \sup_U \operatornorm[\Bigg]{ \E_{m, k}\ofAlt[\bigg]{ d \cdot \frac{2}{d} \frac{I_d + (\shortminus 1)^m \Gamma_k}{2} \otimes \frac{I_D + (\shortminus 1)^m U}{2} } } \\
    &\leq \sup_U \frac{1}{2 K} \operatornorm[\Bigg]{ \frac{1}{2} \sum_{m, k} \Big( I_d \otimes I_D + (\shortminus 1)^m \of[\big]{\Gamma_k \otimes I_D + I_d \otimes U} + (\shortminus 1)^{2m} \, \Gamma_k \otimes U \Big) } \\
    &\leq \frac{1}{2} + \frac{1}{2 K} \sup_U \operatornorm[\bigg]{\sum_k \Gamma_k \otimes U}\,,
\end{align*}
where the supremum is taken over all Hermitian unitaries $U_k \in \U(\CC^D)$. By applying the sub-multiplicative property of the operator norm (\ie, $\operatornorm{A \otimes B} \leq \operatornorm{A} \cdot \operatornorm{B}$ for all $A$ and $B$), along with the fact that the operator norm of a unitary is one, and the inequality $\operatornorm[\big]{\sum_k \Gamma_k} = \sqrt{K}$, we have the following upper-bound:
\begin{equation*}
    \Pr[\text{win}] \leq \frac{1}{2} + \frac{1}{2 \sqrt{K}}\,.
\end{equation*}
This ensures a quadratic convergence rate for the indistinguishability security of our Clifford encryption scheme.
\section{Analytical and Numerical Results for \texorpdfstring{Conjecture~\ref*{conj:main_conjecture}}{Conjecture 1}} \label{sec:analytical-and-numerical-results}

Although we have been unable to fully prove \Cref{conj:main_conjecture}, we present in this section several analytical and numerical results on the first values of $K \in \NN$. Specifically, we prove the conjecture for $K \leq 7$ (\Cref{sec:Keq2,sec:Kleq7}), provide numerical confirmation for $K \leq 17$ (\Cref{sec:Kleq17}), and present numerical evidence for $K=18$ (\Cref{sec:Keq18}). Additionally, we establish a weaker bound than the conjecture (\Cref{thm:5/8-upper-bound}), which holds for all $K \in \NN$.

\paragraph*{Inapplicability of the Triangular Inequality.}
As we saw in~\Cref{sec:Clifford_Algebra}, a naive triangular inequality of the operator norm in~\cref{eq:clifford_winning_probability_upperbound} yields a trivial upper-bound $\Pr[\text{win}]\leq \sfrac{1}{4} + \sfrac{3K}{4 K}=1$.
In light of the property of~\cref{eq:formula-with-norm-2}, one might consider that a slightly more refined triangular inequality could be sufficient to address~\Cref{conj:main_conjecture}, and that
\begin{equation*}
    \operatornorm[\bigg]{\sum_{k \in [K]} \Gamma_k \otimes \big(U_k \otimes I_D + I_D \otimes U_k\big)} + \operatornorm[\bigg]{\sum_{k \in [K]} I_d \otimes U_k \otimes U_k}\!,
\end{equation*}
would be smaller than $K + 2 \sqrt{K}$, but this is not true for $K > 2$. One should notice that $\operatornorm[\big]{\sum_{k \in [K]} \Gamma_k \otimes U_k} \leq K$, with equality by taking all $U_k = \Gamma_k$. With those $U_k$, the tensor products $U_k \otimes U_k$ are pairwise commuting and thus $\operatornorm[\big]{\sum_{k \in [K]} I_d \otimes U_k \otimes U_k} = K$, but the left-hand side of the triangular inequality $\operatornorm[\big]{\sum_{k \in [K]} \Gamma_k \otimes (U_k \otimes I_D + I_D \otimes U_k)}$ is in general larger than $2 \sqrt{K}$, with first values:
\begin{equation*}
    2 \sqrt{2},\; 2 \sqrt{4},\; 2 \sqrt{6},\; 2 \sqrt{9},\; 2 \sqrt{12},\; 2 \sqrt{16},\; 2 \sqrt{20},\; 2 \sqrt{25},\; \ldots
\end{equation*}
In comparison, when all $U_k = \Gamma_k$, the complete operator norm $\operatornorm[\big]{\sum_{k \in [K]} \Gamma_k \otimes (U_k \otimes I_D + I_D \otimes U_k) + I_d \otimes U_k \otimes U_k}$ is smaller than $K + 2 \sqrt{K}$, with first values:
\begin{equation*}
    4,\; 3,\; 6,\; 7,\; 8,\; 9,\; 10,\; 11,\; \ldots
\end{equation*}

\subsection{Elementary Proofs for \texorpdfstring{$K=2$}{K=2}} \label{sec:Keq2}

We present two distinct preliminary proofs showing that \Cref{conj:main_conjecture} is true for $K = 2$, a first one using the equivalence with the~\cite{BL20} scheme, and another one based on the Sum-of-Squares method. 
In \Cref{sec:Kleq7}, we will extend the second proof to larger values of $K$, with much more material involved. 
Note that a third proof can also be obtained by exploiting the anti-commutation of the two pairs of matrices in $W_2$ and the recent results in \cite{HO22,dGHG23,XSW24,MH24} regarding uncertainty relations. 

    \subsubsection{First Proof, with the BB84 MoE Game}

As we saw in~\Cref{sec:Clifford_Algebra}, for $K=2$ our candidate scheme coincides with the scheme defined in~\cite{BL20}, thus the upper-bound of the winning probability of the no-cloning game for three adversaries $(\Pirate, \Bob, \Charlie)$ is the same as in~\cite{BL20}, \ie $\frac{1}{2} + \frac{1}{2\sqrt2}$. Thus, by~\Cref{prop:three_equalities},
\begin{equation*}
    \Pr[\text{win}] \leq \frac{1}{4} + \frac{1}{4 K} \sup_U \operatornorm[\Big]{\sum_k \Big(\Gamma_k \otimes (U_k \otimes I_D + I_D \otimes U_k) + I_d \otimes U_k \otimes U_k\Big) } = \frac{1}{2} + \frac{1}{2\sqrt2}\,,
\end{equation*}
which implies that $\operatornorm{W_2} \leq 2 + 2 \sqrt{2}$ as claimed.

    \subsubsection{Second Proof, with Sum-of-Squares}
    \label{subsubsec:SOS-for-K-equals-2}

From~\cref{eq:winning_probability_upperbound_2}, the upper-bound of the winning probability of the no-cloning game for three adversaries $(\Pirate, \Bob, \Charlie)$ is
\begin{equation*}
    \Pr[\text{win}] \leq \sup_{\psi, B, C} \bra{\psi} \E_{m, k}\ofAlt[\big]{ d \cdot \rho^\T_{m, k} \otimes B_{m|k} \otimes C_{m|k} } \ket{\psi}\,,
\end{equation*}
where the supremum is taken over all $\norm{\psi} = 1$, all families of \PVM~$\set{B_{i|k}}$ and $\set{C_{j|k}}$, as well as their respective dimensions. Following the same steps as~\Cref{sec:scheme}, we found
\begin{equation} \label{eq:clifford_winning_probability_upperbound_bis}
    \Pr[\text{win}] \leq \frac{1}{4} + \frac{1}{4 K} \sup_{\psi, B, C} \bra{\psi} \sum_k \Big(\Gamma_k \otimes (B_k \otimes I_D + I_D \otimes C_k) + I_d \otimes B_k \otimes C_k \Big)\ket{\psi}\,,
\end{equation}
where the supremum is now taken over all families of observables $\set{B_k}$ and $\set{C_k}$. Note that this time, we do not assume the adversaries $(\Pirate, \Bob, \Charlie)$ to be symmetric, \ie $\Bob$ and $\Charlie$ may have different observables. The last part of the upper-bound~\cref{eq:clifford_winning_probability_upperbound_bis} can be stated as the optimization problem:
\begin{equation} \label{eq:clifford_optimization_problem_tensor}
    \begin{aligned}
        \sup_{\psi, B, C} \quad& \bra{\psi} \sum_k \Big(\Gamma_k \otimes (B_k \otimes I_D + I_D \otimes C_k) + I_d \otimes B_k \otimes C_k \Big) \ket{\psi} \,,\\
        \text{subject to} \quad&\boldsymbol{\cdot}\: \norm{\psi} = 1 \,,\\
        &\boldsymbol{\cdot}\: B^*_i = B_i \quad \boldsymbol{\cdot}\:  C^*_i = C_i \quad \boldsymbol{\cdot}\: B^2 = C^2 = I_D \quad \forall i\,.
    \end{aligned}
\end{equation}
This problem can be relaxed through the use of what is called commuting operator strategies, in which the tensor product structure between Alice's and Bob's operators is replaced by the assumption that these operators commute:
\begin{equation} \label{eq:clifford_optimization_problem_commuting}
    \begin{aligned}
        \sup_{\psi, b, c} \quad& \bra{\psi} \sum_k \Big(\Gamma_k \otimes (b_k + c_k) + I_d \otimes b_k \cdot c_k \Big) \ket{\psi} \,,\\
        \text{subject to} \quad&\boldsymbol{\cdot}\: \norm{\psi} = 1\,, \\
        &\boldsymbol{\cdot}\: b^*_i = b_i \quad \boldsymbol{\cdot}\:  c^*_i = c_i \quad \boldsymbol{\cdot}\: b^2_i = c^2_i = I_D \quad \forall i \,,\\ 
        &\boldsymbol{\cdot}\: \commutator{b_i}{c_j} = 0 \quad \forall i,j\,.
    \end{aligned}
\end{equation}
An optimal value for the problem~\cref{eq:clifford_optimization_problem_tensor} is immediately a lower-bound for the problem~\cref{eq:clifford_optimization_problem_commuting} by taking $b_k \coloneqq B_k \otimes I_D$ and $c_k \coloneqq I_D \otimes C_k$. The question of the equality of the two optimization problems~\cref{eq:clifford_optimization_problem_tensor,eq:clifford_optimization_problem_commuting} (and in general of the tensor-based \emph{versus} the commuting-based models) was a long-standing problem that was only recently refuted by \cite{JNV+21}. However, in the case of finite-dimensional Hilbert spaces, the equality holds as an inductive consequence of Tsirelson's theorem \cite{Tsi93,RXL26}.

If the optimal value of the problem~\cref{eq:clifford_optimization_problem_commuting} is $K + 2 \sqrt{K}$, then the Hermitian operator
\begin{equation}   \label{eq:definition-of-PK}
    P_K \coloneqq (K + 2 \sqrt{K}) \cdot I_d \otimes I_{D^2} - \sum_k \Big(\Gamma_k \otimes (b_k + c_k) + I_d \otimes b_k \cdot c_k\Big)\,,
\end{equation}
must be positive semi-definite under the constraints of \cref{eq:clifford_optimization_problem_commuting}. If we can find a family $\set{H_i}_i$ such that $P_k = \sum_i H_i^* H_i$, then the operator $P_K$ would be guaranteed to be positive semi-definite, as a sum of positive semi-definite operators $H_i^* H_i$. If all $H_i$ are also Hermitian, then we can write $P_K = \sum_i H^2_i$. This constitutes the fundamental principle of the Sum-of-Squares (SoS) decomposition technique, which is used to establish bounds on a variety of quantum correlations, including the CHSH Bell inequality~\cite{Bel64,CHSH69} and its associated Tsirelson bound~\cite{Cir80}.

Let $P(\mathbf{X})$ be an Hermitian polynomial in non-commutative variables $\mathbf{X} = {(X_i)}_i$, then $P(\mathbf{X})$ is a positive semi-definite polynomial (\ie $P(\mathbf{X}) \succeq 0$ for all evaluations of $P$ on matrices $\mathbf{X}$) if and only if $P(\mathbf{X})$ is a Hermitian Sum-of-Squares~\cite{McC01,Hel02}:
\begin{equation*}
    P(\mathbf{X}) = \sum_i H_i(\mathbf{X})^* \; H_i(\mathbf{X})\,.
\end{equation*}
This result can be used to maximize the operator norm of a non-commutative Hermitian polynomial. If the polynomial is constrained to an archimedean semi-algebraic set (a condition that is satisfied in our case), a hierarchy of converging upper-bounds can be obtained based on semi-definite optimization programs (SDP)~\cite{HM04}. The dual problem of those SDPs forms the Navascués-Pironio-Acín (NPA) hierarchy~\cite{NPA08} and correspond to the non-commutative variant of the Lasserre’s hierarchy~\cite{Las01}.

For $K=2$ such SoS decomposition was already known~\cite{BC23}:
\begin{equation*}
    P_2 = \frac{1}{2\sqrt{2}} \of[\big]{ X\otimes b_1 + Z\otimes c_2 - \sqrt{2} \cdot I }^2
    + \frac{1}{2\sqrt{2}} \of[\big]{ X\otimes c_1 + Z\otimes b_2 - \sqrt{2} \cdot I }^2
    + \frac{1}{2} \of[\big]{ b_1 - c_1 }^2
    + \frac{1}{2} \of[\big]{ b_2 - c_2 }^2\,,
\end{equation*}
where we took $\Gamma_1 \coloneqq X$ and $\Gamma_2 \coloneqq Z$. This implies, using~\Cref{prop:three_equalities}, that $\operatornorm{W_2} \leq 2 + 2 \sqrt{2}$.

    \subsection{Proofs for \texorpdfstring{$K\leq7$}{K in \{2, ..., 7\}} and Asymptotic Upper-Bound} \label{sec:Kleq7}

We prove that \Cref{conj:main_conjecture} is true for all $K \leq 7$ in two manners: first using a family of SoS decompositions, then based on its dual SDP problem, the NPA hierarchy.

    \subsubsection{First Proof, with Sum-of-Squares}

We refer to \Cref{subsubsec:SOS-for-K-equals-2} for intuition and background about the SoS method.
Let $K \in \set{2, \ldots, 7}$.
The non-negativity certificate of $P_K$, defined in~\cref{eq:definition-of-PK}, is given by the following SoS:
\begin{equation}\label{eq:first_levels_sos}
    P_K
    \,=\,
    \frac{K-\sqrt{K}}{2K(K-1)} \sum_{i=1}^K \bigg( Q_K + (\sqrt{K}+1)\,\Gamma_i\otimes(c_i-b_i) \bigg)^2
    + 
    \alpha_K\, Q_K^2\,,
\end{equation}
where:
\[
    Q_K
    \,:=
    \,\sqrt{K}\,I\otimes I - \sum_{j=1}^K (\Gamma_j\otimes c_j)
    \qquad\text{and}\qquad
    \alpha_K
    \,:=\,
    \frac{(3K-2)\sqrt{K}-K^2}{2K(K-1)}
    \,.
\]
Notice that the coefficient $\alpha_K$ is positive for all $K\leq 7$, so~\cref{eq:first_levels_sos} is a valide SoS decomposition for $P_2,...,P_7$, which validates~\Cref{conj:main_conjecture} for $K\leq7$. 
However, for $K \geq 8$ the coefficient $\alpha_K$ is negative, so that the decomposition presented in~\cref{eq:first_levels_sos} no longer provides a valid non-negativity certificate. 
Note that it does not exclude the possibility of finding another SoS decomposition valid for larger values of $K$.
It is also worth mentioning that the SoS decomposition given in~\cref{eq:first_levels_sos} can be readily symmetrized with respect to the variables $b_i$ and $c_i$ --- the current formulation has been chosen for its simplicity.

    \subsubsection{Second Proof, with NPA Level-1, and Asymptotic Result}

The NPA hierarchy is dual to the sum-of-squares~(SoS) approach. At level $n$ of the hierarchy, we keep track of only the relations on moments of order up to and including $n$ of the operators. This gives rise to a SDP maximisation whose optimal value is precisely the value that can be certified by a SoS certificate where the squared terms have order $n$. For NPA level $1$, we only consider the relations that are first order in the $b_i$ and $c_i$.

To simplify the relations we work with in the NPA hierarchy, we first express the game value in terms of an anticommuting version of the scenario algebra. Then, at NPA level $1$, we relax the optimisation to only depend on the first-order moments of the observables. 
This turns the problem into an SDP, which can be approximated efficiently. Making use of the symmetry of the problem allows us to drastically simplify the SDP we need to solve, allowing us to do it analytically. 
In \Cref{sec:Kleq17},
we pass to NPA level 2 and we proceed in the same way. 
There, we have to consider the second-order moments. 
We make use of analogous simplifications by symmetry to get a simpler problem; however, this problem becomes too complicated to solve analytically, so we solve it numerically to find the values.

We introduce two scenario algebras and prove that the optimal value of our conjecture is equivalent to the supremum of the operator norm of a new problem: 

\begin{definition}[Scenario Algebra]
    The \emph{scenario algebra} $\mc{A}(K)$ is the $\ast$-algebra generated by $b_1,c_1,\ldots,b_K,c_K$ such that $b_i^2=c_i^2=1$, $b_i^\ast=b_i$, $c_i^\ast=c_i$, and $b_ic_j=c_jb_i$ for all $i,j \in [K]$. 
    
    \noindent The \emph{anticommuting scenario algebra} $\mc{A}_{ac}(K)$ is the $\ast$-algebra generated by $\hat{b}_1,\hat{c}_1,\ldots,\hat{b}_K,\hat{c}_K$ such that $\hat{b}_i^2=\hat{c}_i^2=1$, $\hat{b}_i^\ast=\hat{b}_i$, $\hat{c}_i^\ast=\hat{c}_i$, $\hat{b}_i\hat{c}_i=\hat{c}_i\hat{b}_i$ for all $i$, and $\hat{b}_i\hat{c}_j=-\hat{c}_j\hat{b}_i$ for all $i\neq j \in [K]$.
\end{definition}
In our case, we can take, as a representation, $\Gamma_i\otimes B_i\otimes I$ for $\hat{b}_i$ and $\Gamma_i\otimes I\otimes C_i$ for $\hat{c}_i$. The game polynomial can be seen as an element $p_K\in\mathcal{M}_{d}\otimes\mc{A}(K)$, with $p_K=\sum_k\of[\Big]{\Gamma_k\otimes(b_k+c_k)+I\otimes b_kc_k}$ so that the winning probability is $\frac{1}{4}+\frac{1}{4K}\sup_\pi\norm{(\id\otimes\pi)(p_K)}$, where the supremum is over all finite-dimensional representations of $\mc{A}(K)$.
First, we prove that the winning probabilities over $\mc{A}(K)$ and $\mc{A}_{ac}(K)$ coincide: 

\begin{proposition}
    Let $\hat{p}_K=\sum_k(\hat{b}_k+\hat{c}_k+\hat{b}_k\hat{c}_k)\in\mc{A}_{ac}(K)$. Then, 
    $$\sup_{\pi}\operatornorm{(\id\otimes\pi)(p_K)}=\sup_{\hat{\pi}}\operatornorm{\hat{\pi}(\hat{p}_K)}\,,$$
    where the suprema are over finite-dimensional representations $\pi,\hat\pi$ of $\mc{A}(K)$, $\mc{A}_{ac}(K)$ respectively.
\end{proposition}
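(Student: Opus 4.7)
The plan is to prove the two inequalities separately, each time by building a representation on one side out of a representation on the other so that the relevant operator norms line up.

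For the direction $\sup_{\hat{\pi}}\operatornorm{\hat{\pi}(\hat{p}_K)} \geq \sup_\pi \operatornorm{(\id\otimes\pi)(p_K)}$, I would start from a finite-dimensional representation $\pi$ of $\mc{A}(K)$ on some $\mc{H}_\pi$ and lift it to a representation $\hat{\pi}$ of $\mc{A}_{ac}(K)$ on $\CC^d\otimes\mc{H}_\pi$ by setting $\hat{\pi}(\hat{b}_i) := \Gamma_i\otimes\pi(b_i)$ and $\hat{\pi}(\hat{c}_i) := \Gamma_i\otimes\pi(c_i)$. Each generator is then a Hermitian involution; for $i\neq j$, the minus sign from $\Gamma_i\Gamma_j = -\Gamma_j\Gamma_i$ combines with the commutativity of $\pi(b_i)$ and $\pi(c_j)$ to deliver the required anticommutation $\hat{\pi}(\hat{b}_i)\hat{\pi}(\hat{c}_j) = -\hat{\pi}(\hat{c}_j)\hat{\pi}(\hat{b}_i)$, while for $i=j$ the factor $\Gamma_i^2 = I$ preserves commutativity. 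A short computation then yields $\hat{\pi}(\hat{p}_K) = \sum_k\Gamma_k\otimes\pi(b_k+c_k) + I\otimes\pi(b_kc_k) = (\id\otimes\pi)(p_K)$, so the operator norms are identical.

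For the converse $\sup_\pi\operatornorm{(\id\otimes\pi)(p_K)} \geq \sup_{\hat{\pi}}\operatornorm{\hat{\pi}(\hat{p}_K)}$, which is the subtler direction, I would start from any finite-dimensional $\hat{\pi}$ on $\mc{K}$ and define a representation $\pi$ of $\mc{A}(K)$ on $\CC^d\otimes\mc{K}$ by $\pi(b_i) := \Gamma_i^\T\otimes\hat{\pi}(\hat{b}_i)$ and $\pi(c_i) := \Gamma_i^\T\otimes\hat{\pi}(\hat{c}_i)$; the transpose on the $\Gamma$-factor is the key trick. The scenario-algebra relations then follow from the same sign cancellation: since $\Gamma_i^\T$ is still Hermitian and $(\Gamma_i^\T)^2 = I$, the generators are Hermitian involutions, and for $i\neq j$ the minus signs from $\Gamma_i^\T\Gamma_j^\T = (\Gamma_j\Gamma_i)^\T = -\Gamma_j^\T\Gamma_i^\T$ and from the anticommutation in $\mc{A}_{ac}(K)$ cancel, forcing $\pi(b_i)$ and $\pi(c_j)$ to commute. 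Expanding $(\id\otimes\pi)(p_K) = \sum_k(\Gamma_k\otimes\Gamma_k^\T)\otimes\hat{\pi}(\hat{b}_k+\hat{c}_k) + I\otimes I\otimes\hat{\pi}(\hat{b}_k\hat{c}_k)$ and compressing through the isometry $V: |\eta\rangle \mapsto |\Phi\rangle\otimes|\eta\rangle$, where $|\Phi\rangle = \tfrac{1}{\sqrt d}\sum_i|i\rangle|i\rangle \in \CC^d\otimes\CC^d$ is the normalized maximally entangled vector, each coefficient $\langle\Phi|\Gamma_k\otimes\Gamma_k^\T|\Phi\rangle = \tfrac{1}{d}\Tr(\Gamma_k^2) = 1$ collapses to $1$, so that $V^*(\id\otimes\pi)(p_K)V = \hat{\pi}(\hat{p}_K)$. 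Since compression through an isometry is operator-norm non-increasing, the inequality follows.

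The main obstacle is the geometric content of the second direction. A naive choice $\pi(b_i) = \Gamma_i\otimes\hat{\pi}(\hat{b}_i)$ also yields a valid representation of $\mc{A}(K)$, but the compression by $|\Phi\rangle$ would then involve the scalars $\langle\Phi|\Gamma_k\otimes\Gamma_k|\Phi\rangle = \tfrac{1}{d}\Tr(\Gamma_k\Gamma_k^\T)$, which equals $\pm 1$ depending on whether the Pauli string $\Gamma_k$ contains an even or odd number of $Y$ factors, and the resulting sign flips would prevent the compressed operator from coinciding with $\hat{\pi}(\hat{p}_K)$. Replacing $\Gamma_i$ by $\Gamma_i^\T$ on the auxiliary factor turns $|\Phi\rangle$ into a joint $+1$ eigenvector of all $\Gamma_k\otimes\Gamma_k^\T$, which is precisely what is needed for the compression to reproduce $\hat{\pi}(\hat{p}_K)$ on the nose; once this observation is in place, the remainder of the argument is routine.
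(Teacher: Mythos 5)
Your proof is correct and follows essentially the same route as the paper's: the first direction lifts $\pi$ to $\hat\pi(\hat b_i)=\Gamma_i\otimes\pi(b_i)$, and the converse uses $\pi(b_i)=\Gamma_i^\T\otimes\hat\pi(\hat b_i)$ and pairs against the normalized maximally entangled state on $\C^d\otimes\C^d$ (the paper writes this as evaluating at $\ket{\Psi_d\otimes\psi}$ rather than compressing through the isometry $V$, but these are the same computation). Your extra remark explaining why the transpose is necessary — so that $\ket{\Phi}$ is a joint $+1$ eigenvector of all $\Gamma_k\otimes\Gamma_k^\T$ — is a helpful clarification not spelled out in the paper but does not change the argument.
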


\begin{proof}
    Let $\pi$ be a finite-dimensional representation of $\mc{A}(K)$. Then, let $\hat{\pi}$ be the representation of $\mc{A}_{ac}(K)$ defined by $\hat{\pi}(\hat{b}_k)=\Gamma_k\otimes\pi(b_k)$ and $\hat{\pi}(\hat{c}_k)=\Gamma_k\otimes\pi(c_k)$. It is direct to see that this satisfies the relations of $\mc{A}_{ac}(K)$ and $(\id\otimes\pi)(p_K)=\hat{\pi}(\hat{p}_K)$, and hence that $\operatornorm{(\id\otimes\pi)(p_K)}=\operatornorm{\hat{\pi}(\hat{p}_K)}$. Taking suprema, we get $\sup_{\pi}\operatornorm{(\id\otimes\pi)(p_K)}\leq\sup_{\hat{\pi}}\operatornorm{\hat{\pi}(\hat{p}_K)}$.

    For the other direction, let $\hat{\pi}$ be a finite-dimensional representation of $\mc{A}_{ac}(K)$. Then, define a representation $\pi$ of $\mc{A}(K)$ by $\pi(b_k)=\Gamma_k^\T\otimes\hat{\pi}(\hat{b}_k)$ and $\pi(c_k)=\Gamma_k^\T\otimes\hat{\pi}(\hat{c}_k)$. As above, it is direct to see that this satisfies the relations of $\mc{A}(K)$. Let $\ket{\psi}$ be a unit vector such that $\operatornorm{\hat{\pi}(\hat{p}_K)}=\abs*{\braket{\psi}{\hat{\pi}(\hat{p}_K)}{\psi}}$, and write $\ket{\Psi_d}\in\C^d\otimes\C^d$ for the maximally entangled state. Then, $\braket{\Psi_d}{\Gamma_k\otimes\Gamma_k^\T}{\Psi_d}=1$, so
    \begin{align*}
        \operatornorm{(\id\otimes\pi)(p_K)}
        &\,\geq\,
        \abs[\Big]{\big\langle{\Psi_d\otimes\psi}\,\big|\,{\sum_k\Big(\Gamma_k\otimes\Gamma_k^\T\otimes(\hat{\pi}(\hat{b}_k)+\hat{\pi}(\hat{c}_k))+I_{d^2}\otimes\hat{\pi}(\hat{b}_k\hat{c}_k)\Big)}\,\big|\,{\Psi_d\otimes\psi}\big\rangle}
        \\
        &\,=\,
        \abs[\big]{\braket{\psi}{\hat{\pi}(\hat{p}_K)}{\psi}}
        \,=\,
        \operatornorm[\big]{\hat{\pi}(\hat{p}_K)}\,,
    \end{align*}
    giving the inequality $\sup_{\pi}\operatornorm{(\id\otimes\pi)(p_K)}\geq\sup_{\hat{\pi}}\operatornorm{\hat{\pi}(\hat{p}_K)}$.
\end{proof}

Thus, the winning probability can be found via an optimization over representations of $\mc{A}_{ac}(K)$. Let the bias of a strategy be $4K\Pr[\text{win}]-K$. Then the optimal bias is $\beta_K=\sup_{\hat{\pi}}\norm{\hat{\pi}(\hat{p}_K)}$ and it is the solution of the following optimization:
\begin{equation*}
    \begin{aligned}
        \sup_{\psi, b, c} \quad& \bra{\psi} \sum_k \Big(b_k + c_k + b_k \cdot c_k\Big) \ket{\psi}\,, \\
        \text{subject to} \quad&\boldsymbol{\cdot}\: \norm{\psi} = 1\,, \\
        &\boldsymbol{\cdot}\: b^*_i = b_i \quad \boldsymbol{\cdot}\:  c^*_i = c_i \quad \boldsymbol{\cdot}\: b^2_i = c^2_i = I_D \quad \forall i\,, \\ 
        &\boldsymbol{\cdot}\: b_ic_i=c_ib_i \quad \forall i\,, \\ 
        &\boldsymbol{\cdot}\: b_ic_j=-c_jb_i \quad \forall i\neq j\,.
    \end{aligned}
\end{equation*}
By our conjecture, the value of the optimal bias should be $K+2\sqrt{K}$. We consider the first level of the NPA hierarchy on this algebra~\cite{NPA08}. To do so, we write $\ket{u_i}=b_i\ket{\psi}$, and $\ket{v_i}=c_i\ket{\psi}$, and dilate the parameter spaces so that the only relations on these vectors are those that follow directly from the relations on the operators. For example, we have $\braket{u_i}{v_j}=\braket{\psi}{b_ic_j}{\psi}=-\braket{\psi}{c_jb_i}{\psi}=-\braket{v_j}{u_i}$ for all $i\neq j$. We have the level-$1$ NPA optimization:
\begin{align}
    \label{eq:NPA-level-1-optimization-problem}
\begin{split}
	\sup_{\psi, u, v} \quad&\sum_{i}\Big({\braket{\psi}{u_i}+\braket{\psi}{v_i}+\braket{u_i}{v_i}}\Big)\,,\\
	\text{subject to} \quad&\boldsymbol{\cdot}\:\braket{\psi}{\psi}=\braket{u_i}{u_i}=\braket{v_i}{v_i}=1\quad \forall\,i\,,\\
	&\boldsymbol{\cdot}\:\braket{\psi}{u_i}=\braket{u_i}{\psi} \quad \;\boldsymbol{\cdot}\:\braket{\psi}{v_i}=\braket{v_i}{\psi}\quad \forall\,i\,,\\
	&\boldsymbol{\cdot}\:\braket{u_i}{v_i}=\braket{v_i}{u_i}\quad \forall\,i\,,\\
	&\boldsymbol{\cdot}\:\braket{u_i}{v_j}=-\braket{v_j}{u_i} \quad \forall\,i\neq j\,.
\end{split}
\end{align}
Then, we use the Gram matrix~$G$ of the vectors
$\of[\big]{\ket{\psi},\ket{u_1},\ldots,\ket{u_K},\ket{v_1},\ldots,\ket{v_K}}$ to reformulate the optimization problem:
\begin{equation*}
    G =
    \begin{pmatrix}
        \braket{\psi}{\psi} & \braket{\psi}{u_1} & \dots & \braket{\psi}{u_K} & \braket{\psi}{v_1} & \dots & \braket{\psi}{v_K} \\
        \braket{u_1}{\psi} & \braket{u_1}{u_1} & \dots & \braket{u_1}{u_K} & \braket{u_1}{v_1} & \dots & \braket{u_1}{v_K} \\
        \vdots & \vdots &  & \vdots & \vdots & & \vdots \\
        \braket{u_K}{\psi} & \braket{u_K}{u_1} & \dots & \braket{u_K}{u_K} & \braket{u_K}{v_1} & \dots & \braket{u_K}{v_K} \\
        \braket{v_1}{\psi} & \braket{v_1}{u_1} & \dots & \braket{v_1}{u_K} & \braket{v_1}{v_1} & \dots & \braket{v_1}{v_K} \\
        \vdots & \vdots &  & \vdots & \vdots & & \vdots \\
        \braket{v_K}{\psi} & \braket{v_K}{u_1} & \dots & \braket{v_K}{u_K} & \braket{v_K}{v_1} & \dots & \braket{v_K}{v_K} \\
    \end{pmatrix}\,.
\end{equation*}
The first constraint in \cref{eq:NPA-level-1-optimization-problem} yields that the diagonal elements of~$G$ are all~$1$. Also, since~$G$ is positive semidefinite, the second and third constraints specify elements of~$G$ that are real; and the fourth constraint specifies elements of~$G$ that are imaginary.
Therefore, taking:
\begin{equation*}
    H = \frac{1}{2}
    \begin{pmatrix}
        0 & \bra{1_K} & \bra{1_K} \\
        \ket{1_K} & 0 & I \\
        \ket{1_K} & I & 0
    \end{pmatrix}\,,
\end{equation*}
where $\ket{1_K}\in\C^K$ is the column vector of ones, the optimization becomes:
\begin{align}
\begin{split}\label{eq:gram_matrix_npa_1}
	\sup_{G} \quad&\Tr(HG)\,,\\
	\text{subject to} \quad&\boldsymbol{\cdot}\:G(\psi,\psi)=G(u_i,u_i)=G(v_i,v_i)=1\quad \forall\,i\,,\\
	&\boldsymbol{\cdot}\:G(\psi,u_i), G(\psi,v_i), G(u_i,v_i)\in\R\quad \forall\,i\,,\\
	&\boldsymbol{\cdot}\:G(u_i,v_j)\in i\R\quad\forall\,i\neq j\,,\\
	&\boldsymbol{\cdot}\:G\succeq 0\,.
\end{split}
\end{align}
We can reduce this problem to only three variables:

\begin{lemma}\label{lem:simplified-sdp-npa-1}
    The optimal value of the SDP \cref{eq:gram_matrix_npa_1} is equal to the value of the optimization
    \begin{align*}
    	\sup_{g} \quad&2Kg_1+Kg_3\,,\\
    	\text{subject to} \quad&\boldsymbol{\cdot}\:\begin{pmatrix}1&g_1\bra{1_K}&g_1\bra{1_K}\\g_1\ket{1_K}&I+g_2(\ketbra{1_K}{1_K}-I)&g_3I\\g_1\ket{1_K}&g_3I&I+g_2(\ketbra{1_K}{1_K}-I)\end{pmatrix}\succeq 0\,,\\
        &\boldsymbol{\cdot}\:g_1,g_2,g_3\in\R\,.
\end{align*}
\end{lemma}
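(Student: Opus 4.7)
The plan is to exploit the symmetry of the SDP~\cref{eq:gram_matrix_npa_1}. Both the objective $\Tr(HG)$ and the feasible set carry a natural action of the group $\Sigma \coloneqq S_K \times \set{\mathrm{id},\tau}$, where $S_K$ simultaneously permutes the indices of $\set{u_i}$ and $\set{v_i}$, and $\tau$ swaps $u_i \leftrightarrow v_i$ for every $i$. A direct inspection shows that $H$ is $\Sigma$-invariant, and that every constraint (unit diagonal, designated real entries, designated purely imaginary entries, positive semi-definiteness) is preserved by the action. Hence, by convexity of the feasible set and linearity of the objective, averaging any feasible $G$ over $\Sigma$ yields a $\Sigma$-invariant feasible point with the same objective value, so the optimisation can be restricted to $\Sigma$-invariant Gram matrices.

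The second step is to classify such invariants block by block. The $(0,0)$ entry is pinned to $1$. For each of the $(0,u)$ and $(0,v)$ blocks, $S_K$-invariance forces the row to be a scalar multiple of $\bra{1_K}$; the reality constraint makes that scalar real, and $\tau$ identifies the two blocks, yielding a single real parameter $g_1$. For each of the $(u,u)$ and $(v,v)$ blocks, $S_K$-invariance restricts the matrix to $\mathrm{span}\set{I,\,\ketbra{1_K}-I}$; the diagonal constraint pins the identity coefficient to $1$ and $\tau$ identifies the two blocks, leaving the single real parameter $g_2$. The delicate case is the $(u,v)$ block: $S_K$-invariance gives diagonal value $\alpha$ and off-diagonal value $\beta$, with $\alpha\in\RR$ and $\beta\in i\RR$ by the constraints; the swap $\tau$, combined with the Hermiticity relation $G(v_j,u_i)=\overline{G(u_i,v_j)}$, forces $\beta=\overline{\beta}$ and hence $\beta=0$, so the block reduces to $g_3 I$ with $g_3 \coloneqq \alpha \in \RR$.

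Finally I would substitute this ansatz into $\Tr(HG)$. A direct calculation, using the block form of $H$ together with $\bra{1_K}\ket{1_K}=K$ and $\Tr(I_K)=K$, gives $\Tr(HG) = 2Kg_1 + Kg_3$. Conversely, every matrix of the parametrised form in the statement is manifestly feasible for the original SDP~\cref{eq:gram_matrix_npa_1}, so the two programs share their optimal value, yielding the claimed reduction.

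I expect the main obstacle to be the $(u,v)$ block analysis: one has to track carefully how $\Sigma$ acts on Gram matrix entries (rather than on operator expectation values) and verify that the Hermiticity of $G$ combined with the purely imaginary off-diagonal constraint collapses the block to a real multiple of the identity; the $\tau$-averaging must be carried out at the level of the full Gram matrix to avoid spurious sign issues. Once this step is settled, the remainder of the argument is pure bookkeeping and a single trace computation.
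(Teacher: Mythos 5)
Your proposal is correct and follows essentially the same symmetrization route as the paper's proof. The one genuine difference is in how the off-diagonal entries of the $(u,v)$ block are eliminated: the paper first notes that $H$ is real, so $G$ may be assumed real, and the purely-imaginary constraint then forces $G(u_i,v_j)=0$ for $i\neq j$; you instead obtain the same conclusion directly from $\tau$-invariance combined with Hermiticity of $G$. Both arguments are valid and rest on the same underlying symmetries of $H$. One small point left implicit in your write-up is why the off-diagonal parameter $g_2$ of the $(u,u)$ and $(v,v)$ blocks is real: in the paper this is automatic once $G$ is taken real, whereas in your framework it follows from $S_K$-invariance together with Hermiticity (the transposition $(i\,j)$ gives $G(u_i,u_j)=G(u_j,u_i)=\overline{G(u_i,u_j)}$) --- the same type of reasoning you already apply to the $(u,v)$ block, so it merits an explicit sentence.
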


\begin{proof}
    The simplification exploits the symmetries of $H$, which can be used to induce symmetries on $G$. First, note that $H$ has real components, and therefore $G$ can be assumed to be real, yielding the optimization
    \begin{align*}
    	\sup_{G} \quad&\Tr(HG)\,,\\
    	\text{subject to} \quad&\boldsymbol{\cdot}\:G(\psi,\psi)=G(u_i,u_i)=G(v_i,v_i)=1\quad\forall i\,,\\
    	&\boldsymbol{\cdot}\:G(u_i,v_j)=0\quad\forall\,i\neq j\,,\\
    	&\boldsymbol{\cdot}\:G\succeq 0\,.
    \end{align*}
    Next, for any permutation $\sigma\in \mathfrak S_K$, the matrix $H$ is invariant under the permutation of the indices $\of[\big]{u_i\mapsto u_{\sigma(i)}, v_j\mapsto v_{\sigma(j)}}$. 
    Moreover, $H$ is invariant under the permutation $\of[\big]{u_i\mapsto v_i, v_i\mapsto u_i}$. 
    Thus, $G$ can be supposed to be invariant under those permutations, and we can add the following constraints to the optimization: 
    $G(\psi,u_i)=G(\psi,u_j)=G(\psi,v_i)=G(\psi,v_j)=:g_1$ 
    and $G(u_i,v_i)=G(u_j,v_j)=:g_3$ for all $i,j$, and $G(u_i,u_j)=G(u_k,u_l)=G(v_i,v_j)=G(v_k,v_l)=:g_2$ for all $i\neq j$, $k\neq l$. These additional constraints take the SDP to the form given in the statement.
\end{proof}

By the previous lemma, we can find the optimal value by first finding the set of feasible points $(g_1,g_2,g_3)$ explicitly, and then finding the optimum of the objective function $2Kg_1+Kg_3$ over that set.

\begin{lemma}\label{lem:eigenvalues-1}
	Let $A=\sum_ia_i\ketbra{i}{i}$ and $B=\sum_ib_i\ketbra{i}{i}$ be commuting Hermitian matrices. Then we have the  eigendecomposition
	$$\begin{pmatrix}A&B\\B&A\end{pmatrix}\,=\,\sum_i\of[\Big]{(a_i+b_i)\ketbra{i+}{i+}+(a_i-b_i)\ketbra{i-}{i-}}\,,$$
	where $\ket{i+}=\frac{1}{\sqrt{2}}\begin{psmallmatrix}\ket{i}\\\ket{i}\end{psmallmatrix}$ and $\ket{i-}=\frac{1}{\sqrt{2}}\begin{psmallmatrix}\ket{i}\\ \shortminus\ket{i}\end{psmallmatrix}$.
\end{lemma}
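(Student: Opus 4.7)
The plan is to recognize that the block matrix in question has the tensor product structure
\[
\begin{pmatrix}A & B \\ B & A\end{pmatrix} \,=\, I_2 \otimes A \,+\, X \otimes B,
\]
where $X=\begin{psmallmatrix}0&1\\1&0\end{psmallmatrix}$ is the Pauli $X$ matrix. Since $A$ and $B$ commute and are Hermitian, they share the eigenbasis $\{\ket{i}\}$ given in the statement, and $X$ has eigenvectors $\ket{+}=\frac{1}{\sqrt2}\begin{psmallmatrix}1\\1\end{psmallmatrix}$ and $\ket{-}=\frac{1}{\sqrt2}\begin{psmallmatrix}1\\\shortminus 1\end{psmallmatrix}$ with eigenvalues $\pm 1$. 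The natural candidates for a joint eigenbasis are therefore $\ket{\pm}\otimes\ket{i}$, which coincide with the vectors $\ket{i\pm}$ defined in the statement.

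I would then verify this directly by computation: applying the block matrix to $\ket{i+}$ gives
\[
\begin{pmatrix}A & B \\ B & A\end{pmatrix}\frac{1}{\sqrt2}\begin{pmatrix}\ket{i}\\\ket{i}\end{pmatrix}
\,=\, \frac{1}{\sqrt2}\begin{pmatrix}(a_i+b_i)\ket{i}\\(a_i+b_i)\ket{i}\end{pmatrix}
\,=\, (a_i+b_i)\ket{i+},
\]
and similarly applying it to $\ket{i-}$ yields $(a_i-b_i)\ket{i-}$, using the fact that $A\ket{i}=a_i\ket{i}$ and $B\ket{i}=b_i\ket{i}$.

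Finally, I would check that $\{\ket{i+},\ket{i-}\}_i$ is an orthonormal basis of $\mathbb{C}^d\oplus\mathbb{C}^d$: orthonormality of $\{\ket{i}\}$ together with the orthonormality of $\{\ket{+},\ket{-}\}$ yields $\braket{i\pm}{j\pm}=\delta_{ij}$ and $\braket{i\pm}{j\mp}=0$, and the family has cardinality $2d$, which matches the dimension. The claimed spectral decomposition then follows from the spectral theorem applied to the Hermitian block matrix.

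There is no real obstacle here; the result is an immediate consequence of simultaneous diagonalization, and the main thing to be careful about is the ordering convention for the tensor factors when identifying $\ket{\pm}\otimes\ket{i}$ with the column-vector form given in the statement.
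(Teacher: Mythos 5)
Your proof is correct and takes essentially the same approach as the paper: both exploit the tensor-product structure (you write it as $I_2\otimes A + X\otimes B$, the paper as $\sum_i\begin{psmallmatrix}a_i&b_i\\b_i&a_i\end{psmallmatrix}\otimes\ketbra{i}{i}$) and diagonalize the $2\times2$ block in the $\ket{\pm}$ basis. Your added direct verification that $\ket{i\pm}$ are eigenvectors is a harmless redundancy, not a different argument.
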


\begin{proof}
	We can express $\begin{psmallmatrix}A&B\\B&A\end{psmallmatrix}=\sum_i\begin{psmallmatrix}a_i&b_i\\b_i&a_i\end{psmallmatrix}\otimes\ketbra{i}{i}$. Now, $\begin{psmallmatrix}a_i&b_i\\b_i&a_i\end{psmallmatrix}=(a_i+b_i)\ketbra{+}{+}+(a_i-b_i)\ketbra{-}{-}$, giving the result.
\end{proof}

\begin{lemma}\label{lem:eigenvalues-2}
	Let $H=\sum_{i}\lambda_i\ketbra{v_i}{v_i}$ be a Hermitian matrix. Then the Hermitian matrix $\begin{psmallmatrix}1&\omega\bra{v_1}\\\omega\ket{v_1}&H\end{psmallmatrix}$ has eigenvalues $\lambda_i$ for $i>1$, and $\frac{1+\lambda_1}{2}\pm\sqrt{\of*{\frac{1-\lambda_1}{2}}^2+\omega^2}$.
\end{lemma}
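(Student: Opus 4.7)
The plan is to exploit the block structure of the matrix through an invariant subspace decomposition. First, I would identify $n-1$ eigenvectors corresponding to the eigenvalues $\lambda_i$ for $i > 1$ by lifting the $\ket{v_i}$ into the direct sum space $\CC \oplus \CC^n$, and then extract the remaining two eigenvalues from a $2 \times 2$ restriction on the orthogonal complement.

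Concretely, for each $i > 1$, I would check that the vector $\bigl(\begin{smallmatrix} 0 \\ \ket{v_i} \end{smallmatrix}\bigr)$ is an eigenvector with eigenvalue $\lambda_i$: the top block yields $\omega \langle v_1 | v_i \rangle = 0$ by the orthogonality of the eigenbasis of $H$, while the bottom block returns $H \ket{v_i} = \lambda_i \ket{v_i}$. This accounts for $n-1$ of the $n+1$ eigenvalues of the Hermitian block matrix. The orthogonal complement of the span of these $n-1$ eigenvectors is two-dimensional, spanned by $\bigl(\begin{smallmatrix} 1 \\ 0 \end{smallmatrix}\bigr)$ and $\bigl(\begin{smallmatrix} 0 \\ \ket{v_1} \end{smallmatrix}\bigr)$, and a direct calculation shows that the block matrix preserves this subspace.

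In this two-dimensional invariant subspace, expressed in the above basis, the restriction of the matrix is represented by the $2 \times 2$ matrix $\bigl(\begin{smallmatrix} 1 & \omega \\ \omega & \lambda_1 \end{smallmatrix}\bigr)$, whose trace is $1 + \lambda_1$ and determinant is $\lambda_1 - \omega^2$. Applying the standard quadratic formula for the eigenvalues of a $2 \times 2$ Hermitian matrix and simplifying the discriminant then yields $\frac{1 + \lambda_1}{2} \pm \sqrt{\bigl(\frac{1-\lambda_1}{2}\bigr)^2 + \omega^2}$, which is exactly the claim. The argument is essentially routine; the only point that requires any care is verifying the invariance of the two-dimensional subspace, but this is immediate once one expands the action of the block matrix on each basis vector and uses $\langle v_1 | v_i \rangle = 0$ for $i > 1$ to eliminate the off-diagonal contributions.
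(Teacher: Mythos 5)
Your proof is correct and follows essentially the same route as the paper's: identify $\bigl(\begin{smallmatrix}0\\\ket{v_i}\end{smallmatrix}\bigr)$ as eigenvectors for $i>1$, and reduce the remaining two eigenvalues to those of the $2\times 2$ matrix $\bigl(\begin{smallmatrix}1&\omega\\\omega&\lambda_1\end{smallmatrix}\bigr)$. You merely make the invariant-subspace step explicit, which the paper leaves implicit.
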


\begin{proof}
	Noting that $\begin{psmallmatrix}0\\\ket{v_i}\end{psmallmatrix}$ is an eigenvector with eigenvalue $\lambda_i$ for all $i>1$, the remaining two eigenvalues are the eigenvalues of $\begin{psmallmatrix}1&\omega\\\omega&\lambda_1\end{psmallmatrix}$, which are of the above form.
\end{proof}

\begin{theorem}  \label{thm:5/8-upper-bound}
    The value of the first level of the NPA hierarchy for the Clifford uncloneable encryption MoE game is $\frac{1}{2}+\frac{1}{2\sqrt{K}}$ for $K\leq 7$, and $\frac{5}{8}+\frac{1}{2(K-2)}-\frac{1}{4K}$ for $K>7$.
    In particular, we obtain the upper-bound of $5/8=0.625$ on the winning probability of the no-cloning game in the limit $K\rightarrow\infty$.
\end{theorem}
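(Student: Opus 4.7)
Starting from Lemma~\ref{lem:simplified-sdp-npa-1}, the first-level NPA value is obtained by maximising the linear objective $2Kg_1+Kg_3$ over scalars $(g_1,g_2,g_3)\in\RR^3$ under a single PSD constraint on a $(2K+1)\times(2K+1)$ matrix. The approach is to diagonalise this constraint matrix explicitly and then reduce the optimisation to one-dimensional calculus. Concretely, I would first apply Lemma~\ref{lem:eigenvalues-1} to the lower-right $2K\times 2K$ block, whose diagonal sub-blocks are $A=(1-g_2)I+g_2\ket{1_K}\bra{1_K}$ and whose off-diagonal sub-blocks are $B=g_3 I$: since $A$ and $B$ commute and are diagonal in any basis containing $\ket{1_K}/\sqrt{K}$, the lemma produces eigenvalues $1+g_2(K-1)\pm g_3$ (in the symmetric and antisymmetric $\ket{1_K}$ sectors) and $1-g_2\pm g_3$ (each with multiplicity $K-1$, in the orthogonal complement). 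The key observation is that the coupling column $g_1(\ket{1_K},\ket{1_K})^{\T}$ equals $g_1\sqrt{2K}\,\ket{1_K,+}$, so Lemma~\ref{lem:eigenvalues-2} collapses the PSD condition on the whole matrix to one $2\times 2$ determinant inequality together with three unconstrained eigenvalue inequalities:
\begin{equation*}
    1-g_2+g_3\geq 0,\quad 1-g_2-g_3\geq 0,\quad 1+g_2(K-1)-g_3\geq 0,\quad 1+g_2(K-1)+g_3\geq 2Kg_1^2.
\end{equation*}

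Next, since the objective is monotone in $g_1$, I would saturate the quadratic constraint. Writing $s:=g_2$ and $t:=g_3$, the problem reduces to $\sup_{s,t}\bigl(\sqrt{2K(1+s(K-1)+t)}+Kt\bigr)$ subject to $s\leq 1-|t|$ and $1+s(K-1)-t\geq 0$. The objective is increasing in $s$, so the face $s=1-|t|$ is active. For $t\in[-1,0]$ (so $s=1+t$) one gets $K\sqrt{2(1+t)}+Kt$, which is monotone with maximum $K\sqrt{2}$ at $t=0$; for $t\in[0,1]$ (so $s=1-t$) one gets $f(t):=\sqrt{2K(K-(K-2)t)}+Kt$, which is strictly concave since $f''<0$. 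Setting $f'(t)=0$ yields $\sqrt{2K(K-(K-2)t)}=K-2$, giving the unique critical point $t^{\star}=(K^2+4K-4)/(2K(K-2))$, and a short check shows $t^{\star}\leq 1$ if and only if $K^2-8K+4\geq 0$, i.e.\ $K\geq 4+2\sqrt{3}$, which among integers means $K\geq 8$.

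This produces the two regimes. For $K\leq 7$ the constrained maximum of $f$ on $[0,1]$ is at the boundary $t=1$, giving $f(1)=K+2\sqrt{K}$ and hence winning probability $\tfrac12+\tfrac{1}{2\sqrt{K}}$; for $K\geq 8$ it is attained at $t^{\star}$, where using the relation $f(t^\star)=(K-2)+Kt^\star$ a routine simplification gives $f(t^{\star})=(3K^2-4K+4)/(2(K-2))$ and hence winning probability $\tfrac{5}{8}+\tfrac{1}{2(K-2)}-\tfrac{1}{4K}$, whose limit as $K\to\infty$ equals $5/8$. The main conceptual point --- and the only non-mechanical step --- is recognising that the coupling vector aligns with a single eigenvector of the inner block, which is what allows Lemma~\ref{lem:eigenvalues-2} to be applied directly and reduces the whole SDP to three scalars; once this is noticed, the split at $K=7$ is exactly the value at which the unconstrained maximiser of $f$ enters the interval $[0,1]$, and the remaining computation is bookkeeping.
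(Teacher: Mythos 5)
Your proof is correct and follows essentially the same route as the paper: reduce the NPA-1 SDP via \Cref{lem:simplified-sdp-npa-1}, diagonalise the constraint matrix with \Cref{lem:eigenvalues-1} and \Cref{lem:eigenvalues-2}, and reduce to a one-variable concave maximisation whose critical point crosses the feasible boundary exactly between $K=7$ and $K=8$. The only (cosmetic) difference is the change of variables at the end: the paper substitutes $(x,y,\lambda)=(1-g_2+g_3,\,1-g_2-g_3,\,2g_1+g_3)$, shows $y=0$ is optimal, and then treats $\lambda$ as the free variable, whereas you eliminate $g_1$ directly by saturating the quadratic constraint and then push $s=g_2$ to the boundary $s=1-|t|$; both yield the same univariate function and critical point $t^\star=(K^2+4K-4)/(2K(K-2))$, and your explicit check of the $t<0$ branch is a small bonus the paper handles implicitly via $y\ge0$.
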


\begin{proof}
    First, we want to find the feasible points of the optimization in \Cref{lem:simplified-sdp-npa-1} by calculating the eigenvalues of the matrices $G$ of that form. Using \Cref{lem:eigenvalues-1}, we see that the eigenvalues of $$\begin{pmatrix}I+g_2\,(\ketbra{1_K}{1_K}\shortminus I)&g_3\,I\\g_3\,I&I+g_2(\ketbra{1_K}{1_K}\shortminus I)\end{pmatrix}$$ are $Kg_2+(1-g_2)+g_3$ with eigenvector $\frac{1}{\sqrt{2K}}\ket{1_{2K}}$, and $Kg_2+(1-g_2)-g_3$, $1-g_2+g_3$, $1-g_2-g_3$ with eigenvectors orthogonal to $\frac{1}{\sqrt{2K}}\ket{1_{2K}}$. Using \Cref{lem:eigenvalues-2}, the eigenvalues of $G$ are $Kg_2+(1-g_2)-g_3$, $1-g_2+g_3$, $1-g_2-g_3$ and $\frac{1}{2}(1+Kg_2+(1\shortminus g_2)+g_3)\pm\sqrt{\frac{1}{4}\of*{1\shortminus (Kg_2+(1\shortminus g_2)+g_3)}^2+2Kg_1^2}$. In order for this to be a feasible point of the SDP, all of these eigenvalues must be positive. This means that $1+(K-1)g_2\geq g_3$, $1-g_2\geq\pm g_3$, and $1+(K-1)g_2+g_3\geq 2Kg_1^2$.

    To find the optimal point, consider a new parametrisation $x=1-g_2+g_3$, $y=1-g_2-g_3$, $\lambda=2g_1+g_3$. Then, $g_1=\lambda/2-(x-y)/4$, $g_2=1-(x+y)/2$, $g_3=(x-y)/2$ so the objective function becomes $K\lambda$ and the constraints become $x,y\geq 0$, $2\geq x+\frac{K\shortminus2}{K}\,y$, and $\lambda^2-(x-y)\lambda+\frac{(x\shortminus y)^2}{4}+\frac{K\shortminus2}{K}\,x+y-2\leq 0$, which is a degree-$2$ polynomial in $\lambda$. 
    As $\lambda$ is to be maximised, we get:
    $$
        \lambda
        \,=\,
        \frac{x\shortminus y}{2}+\sqrt{2-\frac{K\shortminus2}{K}\,x-y}\,\,.
    $$
    Since $\lambda$ decreases with $y$, it is maximised at $y=0$, giving $x\in[0,2]$ and $\lambda=\frac{x}{2}+\sqrt{2-\frac{K\shortminus2}{K}\,x}$. 
    Its derivative is:
    $$
        \frac{d\lambda}{dx}
        \,=\,
        \frac{1}{2}-\frac{K\shortminus2}{2K\sqrt{2-\frac{K\shortminus2}{K}\,x}}\,\,,
    $$
    which is $0$ if and only if $x=\frac{2K}{K\shortminus2}-\frac{K\shortminus2}{K}$, and positive for smaller $x$. If $K\leq 7$, then this value of $x$ is greater than $2$, so $\lambda$ is maximised at $x=2$, giving
    $\lambda=1+\frac{2}{\sqrt{K}},$
    and hence winning probability 
    $$
        w_1
        \,=\,
        \frac{1}{4}+\frac{\lambda}{4}
        \,=\,
        \frac{1}{2}+\frac{1}{2\sqrt{K}}\,\,.
    $$
    
    If $K\geq 8$, the maximum value is attained in the interval $[0,2]$ so the optimum
    $\lambda=\frac{K}{K\shortminus2}+\frac{K\shortminus2}{2K},$
    and therefore the winning probability 
    \[
        w_1
        \,=\,
        \frac{5}{8}+\frac{1}{2(K-2)}-\frac{1}{4K}\,\,. \qedhere
    \]
\end{proof}

\subsection{Numerical Result with NPA Hierarchy Level 2 for \texorpdfstring{$K\leq17$}{K in \{2, ..., 17\}}} \label{sec:Kleq17}

To get a better upper-bound on the game, we make use of a higher level of the NPA hierarchy. Here, we use level~$2$ of the NPA hierarchy, where we optimize over a Gram matrices indexed by quadratic terms in the generators of $\mc{A}_{ac}(K)$. Even using the symmetries of the problem, as in~\Cref{lem:simplified-sdp-npa-1}, this becomes a vastly more difficult optimization, and requires us to do it numerically. However, we observe the same behaviour as for NPA level 1: the optimal value of the SDP matches the conjectured value exactly, until a certain $K$, where it starts to diverge towards a higher limit than $1/2$. Here, the point of divergence is $K=18$. We give some of the optimal values for the NPA level 1 and 2 optimizations in the table below, highlighting the points of divergence:
\begin{center}
    \begin{tabular}{|c|c|c|c|}
        \hline
        $K$ & NPA level $1$ & NPA level $2$ & Conjecture\\\hline
        2&0.8536&0.8536&0.8536\\
        4&0.7500&0.7500&0.7500\\
        7&0.6890&0.6890&0.6890\\
        8&0.6771&0.6768&0.6768\\
        12&0.6542&0.6443&0.6443\\
        16&0.6451&0.6250&0.6250\\
        17&0.6436&0.6213&0.6213\\
        18&0.6424&0.6182&0.6179\\
        25&0.6367&0.6062&0.6000\\
        35&0.6330&0.5980&0.5845\\\hline
    \end{tabular}
\end{center}
Note also that the NPA level 2 upper-bound of $w_2=0.5980$ for $K=35$ is, to the best of our knowledge, the best known unconditional upper-bound on the security of an uncloneable encryption scheme.

To finish this section, we outline the construction of the SDP that allows us to find the NPA level 2 optimal values more efficiently. In particular, it allows us to reduce the number of free parameters from $1+3K^2$ to $18$. First, the second level of the NPA hierarchy can be derived similarly to the first one by considering the optimization of the value over vectors $\ket{u_i}=b_i\ket{\psi}$, $\ket{v_i}=c_i\ket{\psi}$, $\ket{u_iv_j}=b_ic_j\ket{\psi}$ for all $i,j$, and $\ket{u_iu_j}=b_ib_j\ket{\psi}$, $\ket{v_iv_j}=c_ic_j\ket{\psi}$ for $i\neq j$ (noting that $\ket{u_iu_i}=\ket{v_iv_i}=\ket{\psi}$), and taking the relations on the vectors to be those that follow immediately from the relations on the operators. As for the first level, we may assume that the Gram matrix is real, and that it is invariant under the permutations of the indices $i\mapsto\sigma(i)$ and $u_i\leftrightarrow v_i$ as in \Cref{lem:simplified-sdp-npa-1}. Then, this gives us that the optimization is over $18$ independent parameters $g_1,\ldots,g_{18}\in\R$, where the inner products of the vectors satisfy
\begin{align*}
    1&=\braket{\psi}{\psi}=\braket{u_i}{u_i}=\braket{v_i}{v_i}=\braket{u_iv_i}{u_iv_i}=\braket{u_iv_j}{u_iv_j}=\braket{u_iu_j}{u_iu_j}=\braket{v_iv_j}{v_iv_j}\\
    0&=\braket{u_i}{v_j}=\braket{\psi}{u_iv_j}=\braket{u_iv_i}{u_iu_j}=\braket{u_iv_i}{u_ju_i}=\braket{u_iv_j}{u_iu_k}=\braket{u_iv_j}{u_ku_i}\\
    &=\braket{u_iv_j}{u_ku_l}=\braket{u_iv_i}{v_iv_j}=\braket{u_iv_i}{v_jv_i}=\braket{u_iv_j}{v_kv_j}=\braket{u_iv_j}{v_jv_k}=\braket{u_iv_j}{v_kv_l}\\
    &=\braket{u_iu_j}{v_kv_i}=\braket{u_iu_j}{v_jv_k}\\
    g_1&=\braket{\psi}{u_i}=\braket{\psi}{v_i}=\braket{u_i}{u_iv_i}=\braket{v_i}{u_iv_i}=\braket{u_i}{u_iv_j}=-\braket{v_i}{u_jv_i}=\braket{u_i}{u_iu_j}
    =\braket{v_i}{v_iv_j}\\
    g_2&=\braket{u_i}{v_i}=\braket{\psi}{u_iv_i}=\braket{u_iv_j}{u_iu_j}=-\braket{u_iv_j}{v_jv_i}\\
    g_3&=\braket{u_i}{u_j}=\braket{v_i}{v_j}=\braket{v_iv_j}{v_iv_j}=\braket{u_iv_i}{u_iv_j}=-\braket{u_iv_i}{u_jv_i}=\braket{\psi}{u_iu_j}=\braket{\psi}{v_iv_j}\\
    &=\braket{u_iv_j}{u_iv_k}=\braket{u_iv_j}{u_kv_j}=\braket{u_iu_j}{u_iu_k}=\braket{v_iv_j}{v_iv_k}\\
    g_4&=\braket{u_i}{u_jv_j}=-\braket{u_i}{u_jv_i}=\braket{v_i}{u_jv_j}=\braket{v_i}{u_iv_j}=\braket{v_i}{u_iu_j}=-\braket{v_i}{u_ju_i}\\
    &=\braket{u_i}{v_iv_j}=-\braket{u_i}{v_jv_i}\\
    g_5&=\braket{u_i}{u_jv_k}=-\braket{v_i}{u_jv_k}=\braket{v_i}{u_ju_k}=\braket{u_i}{v_jv_k}\\
    g_6&=\braket{u_iv_i}{u_jv_j}=-\braket{u_iu_j}{v_jv_i}\\
    g_7&=\braket{u_iv_j}{u_jv_i}=-\braket{u_iu_j}{v_iv_j}\\
    g_8&=\braket{u_iv_j}{u_kv_i}=-\braket{u_iu_j}{v_iv_k}=\braket{u_iu_j}{u_kv_k}\\
    g_9&=\braket{u_iv_j}{u_kv_l}=\braket{u_iu_j}{v_kv_l}\\
    g_{10}&=\braket{u_i}{u_ju_i}=\braket{v_i}{v_jv_i}\\
    g_{11}&=\braket{u_i}{u_ju_k}=\braket{v_i}{v_jv_k}\\
    g_{12}&=\braket{u_iv_j}{u_ju_i}=-\braket{u_iv_j}{v_iv_j}\\
    g_{13}&=\braket{u_iv_i}{u_ju_k}=\braket{u_iv_j}{u_ku_j}=\braket{u_iv_i}{v_jv_k}=-\braket{u_iv_j}{v_kv_i}\\
    g_{14}&=\braket{u_iv_j}{u_ju_k}=-\braket{u_iv_j}{v_iv_k}\\
    g_{15}&=\braket{u_iu_j}{u_ju_i}=\braket{v_iv_j}{v_jv_i}\\
    g_{16}&=\braket{u_iu_j}{u_ku_i}=\braket{v_iv_j}{v_kv_i}\\
    g_{17}&=\braket{u_iu_j}{u_ku_j}=\braket{v_iv_j}{v_kv_j}\\
    g_{18}&=\braket{u_iu_j}{u_ku_l}=\braket{v_iv_j}{v_kv_l}\\
\end{align*}
for all distinct $1\leq i,j,k,l\leq K$. Then, the SDP simplifies to the form
\begin{align*}
    	\sup_{g} \quad&2\,K\,g_1+K\,g_2\,,\\
    	\text{subject to} \quad&\boldsymbol{\cdot}\:G_0+g_1\,G_1+\ldots+g_{18}\,G_{18}\succeq 0\,,\\
        &\boldsymbol{\cdot}\:g_1,g_2,\ldots,g_{18}\in\R\,,
\end{align*}
where $G_0$ is the $1+3K^2$-dimensional matrix that has components $1$ where the Gram matrix $G$ is $1$ and zeros elsewhere, $G_1$ is the matrix that has components $\pm1$ where $G$ is $\pm g_1$, $G_2$ is the matrix that has components $\pm1$ where $G$ is $\pm g_2$, and so on. We solve this SDP numerically to find the winning probabilities displayed in the second column of the above table.

\subsection{Numerical Result with the Alternating Optimization Algorithm} \label{sec:Keq18}

\begin{figure}
    \centering
    \includegraphics[width=0.4\linewidth]{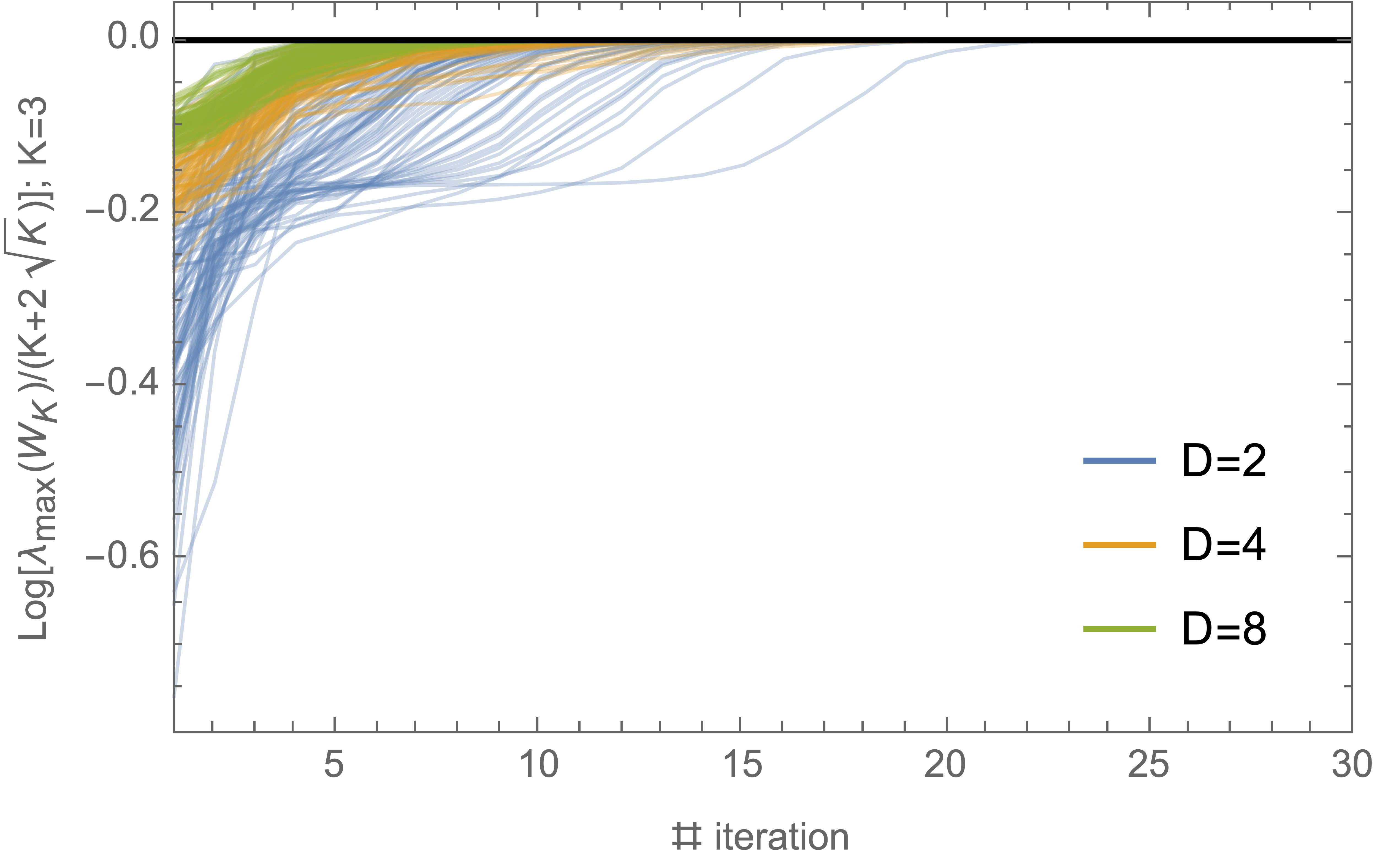}
    \includegraphics[width=0.4\linewidth]{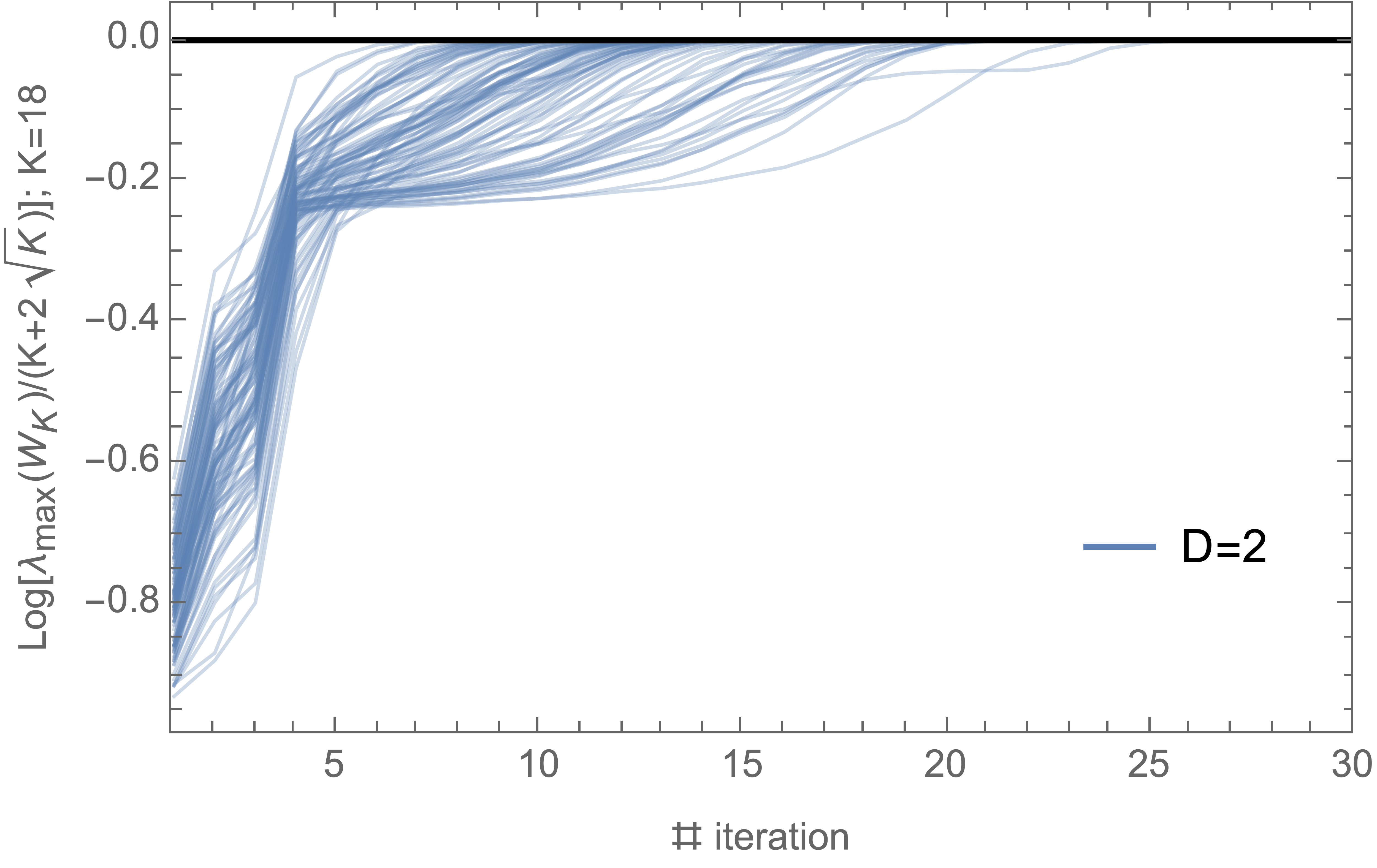}
    \caption{Optimizing the maximum eigenvalue of the operator $W_K$ using a seesaw algorithm. Left panel: $K=3$; right panel: $K=18$. We consider different matrix sizes $D$ and we run $100$ instances of the algorithm (with random initializations) for each dimension, for $M=10$ iterations of the three steps. The~$X$ axis tracks the intermediate step of the algorithm, ranging from $1$ to $3M=30$. The~$Y$ axis tracks the log-relative error $\log[\lambda_{\max}(W_K)/(K + 2\sqrt K)]$. Note that all the curves are below 0, providing evidence towards \Cref{conj:main_conjecture}.}
    \label{fig:seesaw}
\end{figure}

We present in this subsection a heuristic algorithm for optimizing the largest eigenvalue of the operator
$$
W_K = \sum_{k=1}^K \Big(\Gamma_k \otimes B_k \otimes I + \Gamma_k \otimes I \otimes C_k + I \otimes B_k \otimes C_k\Big)\,,
$$
over self-adjoint contractions $B_1, C_1, \ldots, B_K, C_K$. We shall use the \emph{alternating optimization} (or \emph{seesaw}) method~\cite{BH02}. Note that our goal is to maximize a convex function over a convex set, \ie
\[
    \max_{\substack{\|z\| = 1\\\operatornorm{B_k} \leq 1 \\ \operatornorm{C_k} \leq 1}} \langle z | W_K | z \rangle\,,
\]
hence our problem does not fall in the classical convex optimization framework. 

Our algorithm optimizes iteratively over each of the variables $z$, $\{B_k\}_{k=1}^K$, and $\{C_k\}_{k=1}^K$ for a pre-determined number of iterations $M$. The variables $z, B_k, C_k$ are initialized with random values ($z$ uniform on the unit sphere of $\CC^d \otimes \CC^D \otimes \CC^D$, and $B_k$, $C_k$ i.i.d. with Haar-distributed eigenvectors and half of eigenvalues $\pm 1$). We give below the details of each optimization step. 
\begin{enumerate}
    \item Optimizing $z$: compute the largest eigenvalue of the current operator $P$ and assign to $z$ the corresponding eigenvector. 
    \item Optimizing the contractions $B$: permute the first two tensors to rewrite the problem as 
    $$\langle z | \sum_{k=1}^K \Gamma_k \otimes I \otimes C_k | z \rangle + \max_{\operatornorm{B_k} \leq 1} \langle z_{1 \leftrightarrow 2} | \sum_{k=1}^K B_k \otimes (\Gamma_k \otimes I + I \otimes C_k)| z_{1 \leftrightarrow 2} \rangle\,,$$
    where $z_{1 \leftrightarrow 2}$ is the 3-tensor $z$ with the first two factors permuted. Clearly, the maximum above is equal to the sum of maxima over individual $B_k$'s that we can further decompose as 
    \begin{align*}
        &\max_{\operatornorm{B_k} \leq 1} \langle z_{1 \leftrightarrow 2} |  B_k \otimes (\Gamma_k \otimes I + I \otimes C_k)| z_{1 \leftrightarrow 2} \rangle \\ &= \max_{\operatornorm{B_k} \leq 1} \langle B_k , Z_{13|2} ^*(\Gamma_k \otimes I + I \otimes C_k) Z_{13|2} \rangle \\
        &= \|Z_{13|2}^* (\Gamma_k \otimes I + I \otimes C_k) Z_{13|2} \|_1\,,
    \end{align*}
    where $Z_{13|2} \in \mathcal M_{Dd \times D}$ is the reshaping of the 3-tensor $z_{1 \leftrightarrow 2}$, see~\Cref{fig:z-diagrams}. In the last equality above we have used the following fact: 
    $$\max_{\operatornorm{B} \leq 1} \langle B, X \rangle = \|X\|_1\,,$$
    with the maximum being attained for 
    $$B_{\textrm{opt}} = \sum_i \operatorname{sign}(\lambda_i) \ketbra{x_i}{x_i} \qquad \text{for} \qquad X = \sum_i \lambda_i \ketbra{x_i}{x_i}\,.$$
    We apply this procedure for all the maximization problems corresponding to the $B_k$'s and update the matrices $B_k$ accordingly. 
    \item Optimizing the contractions $C$: similar procedure as above, up to tensor permutation. 
\end{enumerate}

\begin{figure}[t]
    \centering
    \includegraphics[width=0.7\linewidth]{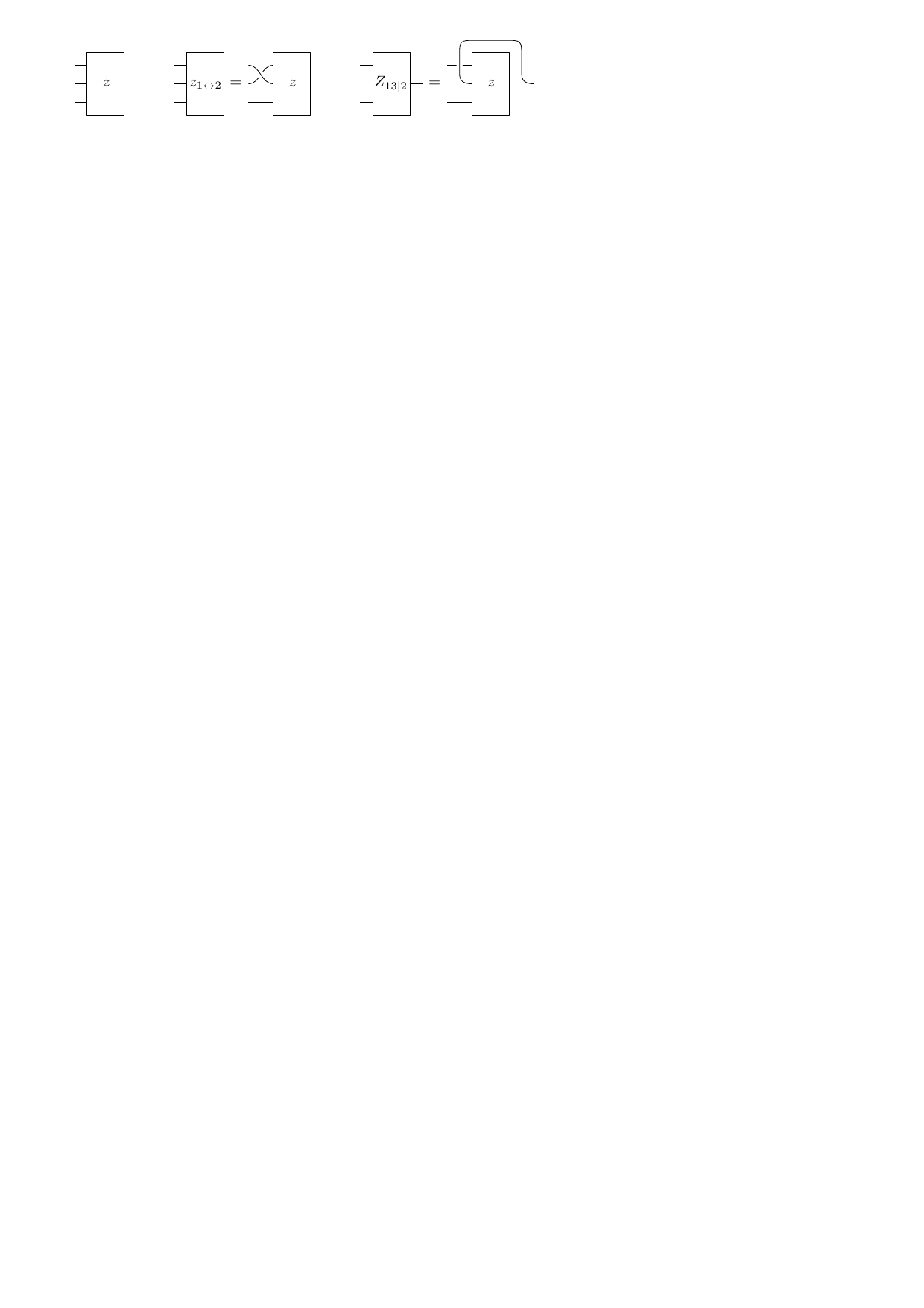}
    \caption{Graphical representations of the tensors $z$, $z_{1 \leftrightarrow 2}$, and $Z_{13|2}$.}
    \label{fig:z-diagrams}
\end{figure}

We present in~\Cref{fig:seesaw} the results of numerical experiments in the case $K=3$ and $K=18$ for different matrix sizes $D$. The results we find agree with \Cref{conj:main_conjecture}.

\subsection*{Acknowledgements}

The authors gratefully acknowledge Arthur Mehta for the many useful comments and support throughout the development of this work. 
We are also thankful to Sang-Jung Park for the insightful discussions and constructive feedback on the preliminary version of the manuscript. Special thanks go to Benoît Collins for his many enlightening discussions, as well as to Andreas Bluhm, Mariia Elovenkova, Aabhas Gulati, and Faedi Loulidi for their thoughtful comments and suggestions.

We acknowledge the support of the Natural Sciences and Engineering Research Council of Canada (NSERC), MITACS grant FR113029, and of the ANR project \href{https://esquisses.math.cnrs.fr/}{ESQuisses}, grant number ANR-20-CE47-0014-01.
P.B.~acknowledges the support of the Institute for Quantum Technologies in Occitanie. 
E.C.~acknowledges the support of a CGS D scholarship from Canada's NSERC. 
C.P.~acknowledges the support of the ANR project “Quantum Trajectories” grant no. ANR-20-CE40-0024-0.
D.R.~acknowledges the support of the Air Force Office of Scientific Research under award number FA9550-20-1-0375.
This research was enabled in part by support provided by  \href{https://www.computeontario.ca/}{Compute Ontario} and the \href{https://alliancecan.ca}{Digital Research Alliance of Canada}. 

\small
\bibliographystyle{bibtex/bst/alphaarxiv.bst}
\bibliography{
  bibtex/bib/quasar-full.bib,
  bibtex/bib/quasar.bib,
  bibtex/bib/quasar-more.bib
  }

\end{document}